
\documentclass[a4paper,onecolumn,superscriptaddress,accepted=2024-03-04,10pt]{quantumarticle}

\pdfoutput=1

\usepackage[T1]{fontenc} 
\usepackage[english]{babel} 
\usepackage[utf8]{inputenc}

\usepackage[margin=1.8cm]{geometry}

\usepackage[numbers,sort&compress]{natbib}

\usepackage[table,usenames,dvipsnames]{xcolor}
\usepackage[
		colorlinks,
		linkcolor={black!30!blue},
		citecolor={black!30!blue},
		urlcolor={black!30!blue}
]{hyperref}
\usepackage{graphicx}
\usepackage{hyperxmp} 

\usepackage{amsmath,amssymb,mathtools,amsthm,dsfont}
\usepackage{stmaryrd}

\usepackage{enumitem}
\usepackage{subfigure}
\usepackage[normalem]{ulem}

\usepackage{optidef}
\usepackage{algorithm,algorithmicx}
\usepackage[noend]{algpseudocode}

\usepackage[capitalize]{cleveref}
\crefname{equation}{Eq.}{Eqs.}
\Crefname{equation}{Equation}{Equations}

\usepackage[printonlyused,
						withpage,
						]{acronym}
\makeatletter
\AtBeginDocument{%
 \renewcommand*{\AC@hyperlink}[2]{%
   \begingroup
     \hypersetup{hidelinks}%
     \hyperlink{#1}{#2}%
   \endgroup
 }%
}
\makeatother


\newtheorem{theorem}{Theorem}
\newtheorem*{theorem*}{Theorem}

\newtheorem{lemma}[theorem]{Lemma}
\newtheorem{proposition}[theorem]{Proposition}

\newtheorem{problem}[theorem]{Problem}
\newtheorem{definition}[theorem]{Definition}
\newtheorem{conjecture}[theorem]{Conjecture}
\newtheorem*{conjecture*}{Conjecture}
\theoremstyle{definition}



\newcommand{\e}{\ensuremath\mathrm{e}}
\renewcommand{\i}{\ensuremath\mathrm{i}}

\DeclareMathOperator{\LandauO}{\mathrm{O}}

\newcommand{\fro}{\mathrm{F}}








\newcommand{\Gate}[1]{\operatorname{#1}}

\NewDocumentCommand\Cl{mg}{
    \ensuremath{\mathrm{Cl}_{#1}\IfNoValueTF{#2}{}{(#2)}}%
}
\NewDocumentCommand\HW{mg}{
    \ensuremath{\mathrm{HW}_{#1}\IfNoValueTF{#2}{}{(#2)}}%
}


\newcommand{\RR}{\mathbb{R}}
\newcommand{\QQ}{\mathbb{Q}}
\newcommand{\ZZ}{\mathbb{Z}}
\newcommand{\NN}{\mathbb{N}}
\newcommand{\1}{\mathds{1}}

\newcommand{\FF}{\mathbb{F}}

\newcommand{\R}{\mathbb{R}}







\renewcommand{\vec}[1]{\boldsymbol{#1}}

\DeclarePairedDelimiterX{\abs}[1]{\lvert}{\rvert}{%
  \ifblank{#1}{\,\cdot\,}{#1}
}   

\DeclarePairedDelimiterX\norm[1]\lVert\rVert{%
  \ifblank{#1}{\,\cdot\,}{#1}
}   

\newcommand{\lpnorm}[2][p]{\norm{#2}_{\ell_{#1}}}   

\newcommand{\lonenorm}[2][1]{\norm{#2}_{\ell_{#1}}}
\newcommand{\linfnorm}[2][\infty]{\norm{#2}_{\ell_{#1}}}
%

%
\DeclarePairedDelimiterX{\iiiNorm}[1]{\lvert}{\rvert}{%
  \delimsize\lvert\delimsize\lvert#1\delimsize\rvert\delimsize\rvert%
}

\DeclarePairedDelimiterXPP\snorm[1]{}\lVert\rVert{_\infty}{\ifblank{#1}{\,\cdot\,}{#1}}   

\DeclarePairedDelimiterXPP\twonorm[1]{}\lVert\rVert{_2}{\ifblank{#1}{\,\cdot\,}{#1}}   

\DeclarePairedDelimiterXPP\trnorm[1]{}\lVert\rVert{_1}{\ifblank{#1}{\,\cdot\,}{#1}}   

\DeclarePairedDelimiterXPP\fnorm[1]{}\lVert\rVert{_{\fro}}{\ifblank{#1}{\,\cdot\,}{#1}}   

\DeclarePairedDelimiterXPP\dnorm[1]{}\lVert\rVert{_\diamond}{\ifblank{#1}{\,\cdot\,}{#1}}   

\DeclarePairedDelimiterXPP\cbnorm[1]{}\lVert\rVert{_\mathrm{cb}}{\ifblank{#1}{\,\cdot\,}{#1}}   
\DeclarePairedDelimiterXPP\onenorm[1]{}\lVert\rVert{_{1\rightarrow 1}}{\ifblank{#1}{\,\cdot\,}{#1}}   
\DeclarePairedDelimiterXPP\ddnorm[1]{}\lVert\rVert{_{\diamond\rightarrow \diamond}}{\ifblank{#1}{\,\cdot\,}{#1}}   
\DeclarePairedDelimiterXPP\ssnorm[1]{}\lVert\rVert{_{\infty\rightarrow\infty}}{\ifblank{#1}{\,\cdot\,}{#1}}   

\providecommand\given{}
\newcommand\SetSymbol[1][]{%
  \nonscript\:#1\vert
  \allowbreak
  \nonscript\:
  \mathopen{}}
\DeclarePairedDelimiterX\Set[1]\{\}{%
  \renewcommand\given{\SetSymbol[\delimsize]}
  #1
}

\DeclarePairedDelimiterX\innerp[2]{\langle}{\rangle}{%
  \ifblank{#1}{\,\cdot\,}{#1} , \ifblank{#2}{\,\cdot\,}{#2}%
}


\DeclarePairedDelimiterX\braket[2]{\langle}{\rangle}%
  {#1\kern0.15ex\delimsize\vert\kern0.15ex\mathopen{}#2}

\DeclarePairedDelimiterX\ketbra[2]{\vert}{\vert}%
  {#1\kern0.15ex\delimsize\rangle\delimsize\langle\kern0.15ex\mathopen{}#2}

\DeclarePairedDelimiterX\sandwich[3]{\langle}{\rangle}%
  {#1\,\delimsize\vert\kern0.15ex\mathopen{}#2\kern0.15ex\delimsize\vert\kern0.15ex\mathopen{}#3}


\DeclarePairedDelimiterX\obraket[2]{(}{)}%
  {#1\kern0.15ex\delimsize\vert\kern0.15ex\mathopen{}#2}

\DeclarePairedDelimiterX\oketbra[2]{\vert}{\vert}%
  {#1\kern0.15ex\delimsize)\delimsize(\kern0.15ex\mathopen{}#2}

\DeclarePairedDelimiterX\osandwich[3]{(}{)}%
  {#1\,\delimsize\vert\kern0.15ex\mathopen{}#2\kern0.15ex\delimsize\vert\kern0.15ex\mathopen{}#3}


\newcommand{\myleft}{\mathopen{}\mathclose\bgroup\left}
\newcommand{\myright}{\aftergroup\egroup\right}










\DeclareMathOperator{\HTL}{\mathrm{Sym}_0}


\newcommand{\class}[1]{{\ensuremath{\mathsf{#1}}}}
\newcommand{\NP}{\class{NP}}

\renewcommand{\Gate}[1]{\operatorname{\mathsf{#1}}}

\newcommand{\hhu}{
	Institute for Theoretical Physics,
      Heinrich Heine University D{\"u}sseldorf, 
      Germany
}
\newcommand{\tuhh}{
    Institute for Quantum Inspired and Quantum Optimization,
    Hamburg University of Technology, Germany
}

\begin{document}

\title{Time-optimal multi-qubit gates: Complexity, efficient heuristic and  gate-time bounds}

\author{P.\ Baßler}
\email{bassler@hhu.de}
\affiliation{\hhu}
\author{M.\ Heinrich}
\affiliation{\hhu}
\author{M.\ Kliesch}
\email{martin.kliesch@tuhh.de}
\affiliation{\tuhh}

\begin{abstract}
Multi-qubit entangling interactions arise naturally in several quantum computing platforms and promise advantages over traditional two-qubit gates.
In particular, a fixed multi-qubit Ising-type interaction together with single-qubit $\Gate{X}$-gates can be used to synthesize global $\Gate{ZZ}$-gates ($\Gate{GZZ}$ gates).
In this work, we first show that the synthesis of such quantum gates that are time-optimal is \NP-hard.
Second, we provide explicit constructions of special time-optimal multi-qubit gates. 
They have constant gate times and can be implemented with linearly many $\Gate{X}$-gate layers.
Third, we develop a heuristic algorithm with polynomial runtime for synthesizing fast multi-qubit gates.
Fourth, we derive lower and upper bounds on the optimal $\Gate{GZZ}$ gate-time.
Based on explicit constructions of $\Gate{GZZ}$ gates and numerical studies, we conjecture that any $\Gate{GZZ}$ gate can be executed in a time $\LandauO(n)$ for $n$ qubits.
Our heuristic synthesis algorithm leads to $\Gate{GZZ}$ gate-times with a similar scaling, which is optimal in this sense.
We expect that our efficient synthesis of fast multi-qubit gates allows for faster and, hence, also more error-robust execution of quantum algorithms.
\end{abstract}

\maketitle

 \hypersetup{
	     pdftitle = {Time-optimal multi-qubit gates: Complexity, efficient heuristic and  gate-time bounds},
	     pdfauthor = {Pascal Baßler,
		     	Markus Heinrich,
	     		Martin Kliesch}
	     pdfsubject = {Quantum computing},
	     pdfkeywords = {quantum, compiling,
	     	 global, interaction, entangling, multi-qubit, gate, synthesis, unitary, all-to-all, connectivity, gate-time, time-optimal,
		     ion trap, Ising, Hamiltonian, NISQ,
		     convex, optimization, linear program, LP, MIP, mixed integer program,
		     complexity, NP, heuristic, cut, polytope, graph theory,
		     digital-analog, DAQC,
		     qubit,
		     Molmer-Sorensen
	     }
	    }

\section{Introduction}
Any quantum computation requires to decompose its logical operations into the platform's native instruction set.
The performance of the computation depends heavily on the available instructions and their implementation in the quantum hardware.
In particular for early quantum devices a major challenge is posed by their short decoherence times, which limits the runtime of a quantum computation significantly.
Therefore, it is not only necessary to optimize the number of native instructions but also their execution time. 

Ising-type interactions give rise to an important and rich class of Hamiltonians ubiquitous in several quantum computing platforms \cite{wang_multibit_2001,monz_14-qubit_2011,kjaergaard_superconducting_2020,figgatt_parallel_2019,lu_scalable_2019}.
Previously, we have utilized these Ising-type interactions in a new synthesis method \cite{basler_synthesis_2022}.
In particular, we have considered the problem of synthesizing time-optimal multi-qubit gates on a quantum computing platform that supports the following basic operations:
\begin{enumerate}[label= (\Roman*)]\itemsep=0pt
	\item single-qubit rotations can be executed in parallel, and \label{item:parallel_1qubit}
	\item it offers a fixed Ising-type interaction. \label{item:Ising}
\end{enumerate}
The corresponding synthesis of global $\Gate{ZZ}$-gates ($\Gate{GZZ}$ gates) is given by the minimization of the overall gate time, which can be written as a \ac{LP} \cite{basler_synthesis_2022}.
This \ac{LP} is exponentially large in the number of qubits. 
It has been unclear whether such multi-qubit gates can be synthesized in a computationally efficient way while keeping the gate time optimal.

In this work, we prove that this synthesis problem is \NP-hard and provide a close-to-optimal efficient heuristic solution.
To establish this hardness result, we draw the connection between the synthesis problem and graph theory.
The \emph{cut polytope} is defined as the convex hull of binary vectors representing the possible cuts of a given graph \cite{Barahona1986}.
We provide a polynomial time reduction of the membership problem of the cut polytope to the synthesis of time-optimal multi-qubit gates.
Since this membership problem is \NP-complete \cite{Garey_2002_Computers}, the synthesis of time-optimal multi-qubit gates is \NP-hard.
This is akin to the \NP-hardness of finding an optimal control pulse for multi-qubit gates using the Mølmer-Sørensen mechanism \cite{figgatt_parallel_2019,lu_scalable_2019}.

We provide several ways to circumvent the hardness of time-optimal $\Gate{GZZ}$-gates.
First, we provide explicit constructions of time-optimal multi-qubit gates realizing nearest-neighbor coupling under physically motivated assumptions.
Such constructed nearest-neighbor multi-qubit gates exhibit a constant gate time and can be implemented with only linearly many single-qubit gate layers.
We then use these ideas to define relaxations of the underlying linear program, leading to a hierarchy of polynomial-time algorithms for the synthesis of fast multi-qubit gates.
By increasing the level in the hierarchy, this heuristic approach can be adapted to provide substantially better approximations to the optimal solution at the cost of higher polynomial runtime.
For a small number of qubits, numerical experiments show that the so-obtained gate times are close to the optimal solution and come with significant runtime savings.

Among others, we prove bounds on the minimal multi-qubit gate time, and conjecture that it scales at most linear with the number of qubits.
This claim is supported by a class of explicit constructions of time-optimal multi-qubit gates achieving the linear upper bound.
Moreover, we provide numerical evidence that these explicit solutions in fact yield the longest gate time for a small number of qubits.

We expect our results to be useful for the implementation of time-optimal multi-qubit gates in \ac{NISQ} devices and beyond.
The polynomial-time heuristic algorithm makes it possible to efficiently synthesize fast multi-qubit gates for a growing number of qubits.
The here considered multi-qubit gates have been useful in a number of different applications, some of which we investigated in previous work \cite{basler_synthesis_2022}.
Furthermore, they are the main building blocks for a class of \ac{IQP} circuits which might be classically hard to simulate \cite{bremner_16}.
More recently, it was shown that quantum memory circuits and boolean functions can be implemented with a constant number of $\Gate{GZZ}$ gates and additional ancilla qubits \cite{allcock_constant_depth_2023}.
Moreover, there is an ancilla-free construction of multi-qubit Clifford circuits using at most $26$ $\Gate{GZZ}$ gates \cite{bravyi_constant_cost_2022}.
As noted in Ref.~\cite{bravyi_constant_cost_2022}, there is also a shorter implementation for $n\leq 2^{12}$, requiring only $2(\log_2(n)+1)$ $\Gate{GZZ}$ gates.
This implementation is based on a decomposition in Ref.~\cite{bravyi_hadamard_free_2020} and the $\log$-depth implementation of a $\Gate{CX}$ circuit in formula (4) of Ref.~\cite{maslov_depth_2022}.

This paper is structured as follows: We first give a brief introduction to the synthesis of time-optimal multi-qubit gates \cite{basler_synthesis_2022}.
In \cref{sec:LP_is_NP} we proof that the time-optimal multi-qubit synthesis problem is \NP-hard.
However, in \cref{sec:ana_GZZ} we explicitly construct a certain class of time-optimal multi-qubit gates with constant gate time.
The heuristic algorithm based on the ideas of the previous section is introduced in \cref{sec:LP_approx}.
\Cref{sec:lin_scaling} provides gate time bounds for time-optimal multi-qubit gates.
Finally, \cref{sec:numerics} presents numerical evidence for the conjectured linear gate-time scaling and numerical benchmarks for the heuristic algorithm.

\section{Synthesizing multi-qubit gates with Ising-type interactions}
\label{sec:prev_work}
In this section, we give a short introduction to the synthesis of time-optimal multi-qubit gates. For more details we refer to the first two sections of Ref.~\cite{basler_synthesis_2022}.

On the abstract quantum computing platform with $n$ qubits specified by the requirements \ref{item:parallel_1qubit} and \ref{item:Ising} above, interactions between the qubits are generated by an Ising-type Hamiltonian
\begin{equation}
H_{ZZ} \coloneqq - \sum_{i<j}^n J_{ij} \sigma_z^i \sigma_z^j \, ,
\end{equation}
where $\sigma_z^i$ is the Pauli-Z operator acting on the $i$-th qubit.
Note, that diagonal terms, where $i=j$, are excluded since they only change the Hamiltonian by an energy offset.
By $J$ we denote the symmetric matrix with entries $J_{ij}$ in the upper triangular part and vanishing diagonal.
We call $J$ the \emph{physical coupling matrix}.

Conjugating the Hamiltonian $H_{ZZ}$ with $\Gate{X}$ gates on the qubits indicated by the binary vector $\vec{b} \in \FF_2^n$ yields
\begin{equation}
\label{eq:Hising}
\begin{aligned}
\sigma_x^{\vec b} H_{ZZ} \sigma_x^{\vec b} &= - \sum_{i<j}^n J_{ij} \sigma_x^{b_i} \sigma_x^{b_j} \sigma_z^i \sigma_z^j \sigma_x^{b_i}\sigma_x^{b_j} \\
&= - \sum_{i<j}^n J_{ij} (-1)^{b_i} (-1)^{b_j} \sigma_z^i \sigma_z^j \\
&= - \sum_{i<j}^n J_{ij} m_i m_j \sigma_z^i \sigma_z^j \eqqcolon H(\vec m)  \, ,
\end{aligned}
\end{equation}
where we define the \emph{encoding} $\vec m \coloneqq (-1)^{\vec b}$ entry wise.
The sign of the interaction between qubit $i$ and $j$ is given by $m_i m_j \in \{-1,+1 \}$.
We call $H(\vec m)$ the \emph{encoded Hamiltonian}.

Given time steps $\lambda_{\vec m} \geq 0$ during which the encoding $\vec m$ is used, we consider unitaries of the form
\begin{equation}\label{eq:multi-qubitGate}
 \prod_{\vec m} \e^{-\i \lambda_{\vec m} H(\vec m)}
 = \e^{-\i \sum_{\vec m} \lambda_{\vec m} H(\vec m)}
 \eqqcolon \e^{-\i H} \, ,
\end{equation}
where we used that the diagonal Hamiltonians $H(\vec m)$ mutually commute.
For all possible encodings $\vec m\in \{-1,+1\}^n$ we collect the time steps $\lambda_{\vec m}$ in a vector $\vec \lambda \in \R^{2^n}$ and interpret $t= \sum_{\vec m}\lambda_{\vec m}$ as the total gate time of the unitary $\e^{-\i H}$, implemented by the sequence of unitaries \eqref{eq:multi-qubitGate}.
Moreover, we use the symmetry
\begin{equation}
\label{eq:m_sym}
 (-\vec m)(-\vec m)^T = \vec m \vec m^T
\end{equation}
to reduce the degrees of freedom in $\vec\lambda$ from $2^n$ to $2^{n-1}$ by adding up $\lambda_{\vec m} + \lambda_{-\vec m}$ to a single time step.

The so generated unitary is the time evolution operator under the \emph{total Hamiltonian}
\begin{equation}
\label{eq:targetHamiltonian}
  H = - \sum_{i<j}^n A_{ij} \sigma_z^i \sigma_z^j \, ,
\end{equation}
where we have defined the \emph{total coupling matrix}
\begin{equation}
\label{eq:Jdecomp}
A \coloneqq J \circ \sum_{\vec m} \lambda_{\vec m} \vec m \vec m^T \, ,
\end{equation}
and used the linearity of the Hadamard (entry-wise) product $\circ$.
By construction, $A$ is a symmetric matrix with vanishing diagonal.
Let us define the $\binom{n}{2}$-dimensional subspace of symmetric matrices with vanishing diagonal by
\begin{equation}\label{def:symtraceless}
	\HTL(\RR^n) \coloneqq \Set*{M \in \mathrm{Sym}(\RR^n)\given M_{ii}= 0 \ \forall i\in [n]} \, .
\end{equation}
For $A \in \HTL(\RR^n)$, we define an associated multi-qubit gate $\Gate{GZZ}(A)$, where $\Gate{GZZ}$ stands for ``global $ZZ$ interactions'',
\begin{equation}
\label{eq:GZZ}
\begin{aligned}
 \Gate{GZZ}(A) &\coloneqq \exp \left(\i \sum_{i<j}^n A_{ij} \sigma_z^i \sigma_z^j \right) \, .
\end{aligned}
\end{equation}

To determine which matrices can be decomposed as in \cref{eq:Jdecomp}, we denote the non-zero index set of a symmetric matrix as $\operatorname{nz}(M) \coloneqq \{ (i,j) \, | \, M_{ij}\neq 0 , i<j \}$.
Then, the subspace of matrices $A\in\HTL(\RR^n)$ that can be decomposed as in \cref{eq:Jdecomp} is exactly given by the condition $\operatorname{nz}(A)\subseteq\operatorname{nz}(J)$, which we assume to hold from now on.
Thus, all-to-all connectivity enables to decompose any coupling matrix $A\in\HTL(\RR^n)$ but is not strictly required by our approach. 
We call the number of encodings $\vec m$ needed for the decomposition the \emph{encoding cost} of $\Gate{GZZ}(A)$, and $\sum_{\vec m}\lambda_{\vec m}$ the \emph{total $\Gate{GZZ}$ time}.
Note, that both quantities depend on the chosen decomposition.

It is convenient to abstract the following analysis from the physical details given by $J$.
For a matrix $A\in\HTL(\RR^n)$ with $\operatorname{nz}(A)\subseteq\operatorname{nz}(J)$, its possible decompositions are in one-to-one correspondence with the decompositions of the matrix $M \coloneqq A \oslash J \in \HTL(\RR^n)$ where
\begin{equation}\label{eq:MasQuotient}
	(A \oslash J)_{ij} \coloneqq \begin{cases}
		A_{ij} / J_{ij}, & i \neq j \text{ and } J_{ij} \neq 0\, , \\
		0, & \text{otherwise}\, .
	\end{cases}
\end{equation}
We further define the linear operator 
\begin{equation}
\label{eq:lin_op}
\begin{aligned}
		\mathcal V: \RR_{\geq 0}^{2^{n-1}} &\rightarrow \HTL(\RR^n),
	\\
	\vec \lambda &\mapsto \sum_{\vec m} \lambda_{\vec m} \vec m \vec m^T ,
\end{aligned}
\end{equation}
represented in the standard basis by a matrix 
\begin{equation}
 	V  \in \{-1,+1 \}^{\binom{n}{2} \times (2^{n-1}) }\, .
 \end{equation}
Let $\vec v: \HTL (\RR^n) \rightarrow \RR^{\binom{n}{2}}$ be the (row-wise) vectorization of the upper triangular part of the matrix input such that the columns of $V$ are given by $\vec v(\vec m \vec m^T)$.
Our objective is to minimize the total gate time and the amount of $\vec m$'s needed to express the matrix $M \in \HTL(\RR^n)$.
To this end we formulate the following \acf{LP}:
\begin{mini}
{}{\vec 1^T \vec \lambda}{}{}
\label{eq:LP1}
\addConstraint{V \vec \lambda} {=\vec v(M)}{}{}
\addConstraint{ \vec \lambda}{ \in \R^{2^{n-1}}_{\geq 0}}{}{} \, ,
\end{mini}
where $\vec 1=(1,1,\dots,1)$ is the all-ones vector such that $\vec{1}^T \vec \lambda = \sum_{\vec m}\lambda_{\vec m}$.
A \emph{feasible solution} is an assignment of the variables that fulfills all constraints of the optimization problem and an \emph{optimal solution} is a feasible solution which also minimizes/maximizes the objective function.
Throughout this paper, we indicate optimal solutions by an asterisk, e.g.\ $\vec\lambda^*$.
Note, that the \ac{LP}~\eqref{eq:LP1} has a feasible solution for any symmetric matrix $M$ with vanishing diagonal \cite[Theorem~II.2]{basler_synthesis_2022}.
The theory of linear programming then guarantees the existence of an optimal solution with at most $\binom{n}{2}$ non-zero entries \cite[Proposition~II.3]{basler_synthesis_2022}.

A standard tool in convex optimization is \emph{duality} \cite{Boyd2009convexoptimization} which will be used in \cref{sec:lin_scaling}.
The dual \ac{LP} to the \ac{LP} \eqref{eq:LP1} reads as follows:
\begin{maxi}
{}{\innerp{M}{\vec y}}{}{}
\label{eq:dualLP}
\addConstraint{V^T \vec y } {\leq \vec 1}{}{}
\addConstraint{ \vec y }{\in \R^{\binom{n}{2}}}{}{} \, ,
\end{maxi}
with the inner product $\innerp{}{}: \HTL(\RR^n) \times \RR^{\binom{n}{2}} \rightarrow \RR$, $\innerp{M}{\vec y} \mapsto \vec v (M)^T \vec y $.
Here, inequalities between vectors are to be understood entry-wise. 
A simple, but important fact is the following:
If $\vec\lambda^*$ is an optimal solution to the primal \ac{LP} \eqref{eq:LP1}, then any feasible solution $\vec y$ to the dual \ac{LP} \eqref{eq:dualLP} provides a \emph{lower bound} as $\innerp{M}{\vec y} \leq \vec 1^T\vec\lambda^*$.
Moreover, the feasibility of the \ac{LP} \eqref{eq:LP1} implies that we have \emph{strong duality}:
if $\vec y^*$ is a dual optimal solution, then we have equality between the optimal values, $\innerp{M}{\vec y^*} = \vec 1^T\vec\lambda^*$.

\section{Main results}
\label{sec:main_results}
We want to highlight our main contributions.
First, we provide the hardness results for the synthesis of time-optimal $\Gate{GZZ}$ gates.
\begin{theorem*}[\cref{thm:hardness}]
The decision version of the \ac{LP}~\eqref{eq:LP1}, is \NP-complete.
\end{theorem*}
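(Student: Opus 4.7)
The plan is to establish NP-completeness of the decision version of \cref{eq:LP1}---``given $M \in \HTL(\RR^n)$ and $k \geq 0$, is $\vec 1^T \vec\lambda^* \leq k$?''---by reduction from the membership problem for the cut polytope, as hinted in the introduction. The observation driving the reduction is that the columns of $V$ are precisely the vectors $\vec v(\vec m \vec m^T)$ for $\vec m \in \{-1,+1\}^n$, so the set of right-hand sides $\vec v(M)$ attainable by the LP with fixed total time $t$ is exactly $t$ times the $\pm 1$-version of the cut polytope,
\begin{equation*}
 \mathrm{CUT}_n^\pm \coloneqq \conv\{\vec v(\vec m \vec m^T) : \vec m \in \{-1,+1\}^n\} \, .
\end{equation*}
Since $\mathrm{CUT}_n^\pm$ is the image of the standard $\{0,1\}$-cut polytope $\mathrm{CUT}_n$ under the affine bijection $\vec x \mapsto \vec 1 - 2 \vec x$, its membership problem is NP-complete by \cite{Garey_2002_Computers}.

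First I would verify that the decision problem lies in NP. The LP has exponentially many variables, but by Proposition~II.3 of \cite{basler_synthesis_2022} there is an optimal vertex solution $\vec\lambda^*$ supported on at most $\binom{n}{2}$ encodings, and standard LP theory ensures that its nonzero coordinates are rationals with bit-length polynomial in the input. A certificate thus consists of this support together with the corresponding values of $\lambda_{\vec m}$, and verification reduces to checking the $\binom{n}{2}$ linear equalities $V\vec\lambda = \vec v(M)$ together with $\vec 1^T \vec\lambda \leq k$, all in polynomial time.

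For NP-hardness, the reduction maps an instance $M$ of $\mathrm{CUT}_n^\pm$-membership to the LP decision instance $(M, k=1)$; this is clearly polynomial-time computable. Correctness rests on the equivalence
\begin{equation*}
 M \in \mathrm{CUT}_n^\pm \iff \vec 1^T \vec\lambda^*(M) \leq 1 \, .
\end{equation*}
The forward direction is immediate: a convex decomposition of $M$ yields a feasible $\vec\lambda$ with $\vec 1^T \vec\lambda = 1$. For the converse, the key observation is that $0 \in \mathrm{CUT}_n^\pm$, which follows from averaging $\vec v(\vec m \vec m^T)$ uniformly over $\vec m \in \{-1,+1\}^n$ and using that $\EE_{\vec m}[m_i m_j] = 0$ for $i \neq j$. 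Consequently, given any feasible $\vec\lambda$ with $\vec 1^T \vec\lambda = s \in [0,1]$, either $s = 0$ (forcing $M = 0 \in \mathrm{CUT}_n^\pm$), or $s > 0$, in which case $M/s \in \mathrm{CUT}_n^\pm$ and the convex combination $M = s\,(M/s) + (1-s)\cdot 0$ exhibits $M$ as a member of $\mathrm{CUT}_n^\pm$.

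I expect the main obstacle to lie not in the reduction itself, which becomes a clean translation once the cut-polytope viewpoint is adopted, but in two bookkeeping points: identifying an NP-complete reference problem in exactly the $\pm 1$ form required here (justified via the affine equivalence with $\mathrm{CUT}_n$), and rigorously controlling the polynomial bit complexity of an optimal vertex so that NP-membership remains on solid footing despite the exponentially large LP.
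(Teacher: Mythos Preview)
Your proposal is correct and follows essentially the same approach as the paper: NP-membership via the sparse optimal vertex guaranteed by \cite[Proposition~II.3]{basler_synthesis_2022}, and NP-hardness by reducing membership in $\conv(\mathcal E_n)$ (equivalently your $\mathrm{CUT}_n^\pm$, cf.\ \cref{lem:cut_ellip}) to the LP decision problem with threshold $k=1$, using that $0$ lies in the polytope to pad any feasible $\vec\lambda$ with $\vec 1^T\vec\lambda<1$ up to a convex combination. The only cosmetic differences are that the paper cites an explicit Hadamard-based decomposition of the zero matrix (\cref{thrm:tensor_GZZ} with $\vec k=\vec 1_n$) rather than your uniform-averaging argument, and does not dwell on the bit-complexity of the certificate.
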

We say that the synthesis of time-optimal multi-qubit gates (solving \ac{LP}~\eqref{eq:LP1}) is \NP-hard in the sense of the function problem extension of the decision problem class \NP\,\cite{rich_2007}.
We circumvent the hardness of the time-optimal multi-qubit gate synthesis by providing an explicit construction of $\Gate{GZZ}$ gates realizing next neighbor coupling with constant gate time and linear encoding cost.
The assumption of a constant subdiagonal of $J$ is physically motivated and can be realized in an ion trap \cite{johanning_isospaced_2016}.
\begin{theorem*}[\cref{cor:apdx_subdiagonal}, informal]
 Let the subdiagonal of $J$ and $A$ be a constant with values $c$ and $\varphi$ respectively.
 Then $\Gate{GZZ}(A)$ on $n$ qubits has the encoding cost of $d \leq 2n$ and constant total $\Gate{GZZ}$ time $2 \varphi / c$.
 This total gate time is optimal.
\end{theorem*}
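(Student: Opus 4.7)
The statement has two parts: (i) an explicit construction of a decomposition with encoding cost $\leq 2n$ and total gate time $2\varphi/c$, and (ii) optimality of this gate time. My plan is to establish (i) constructively and then derive (ii) from \ac{LP} duality applied to \eqref{eq:LP1} and \eqref{eq:dualLP}.

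For the construction, set $\mu := \varphi/c$. The hypothesis yields $M := A \oslash J$ with $M_{i,i+1} = \mu$ for $i = 1,\dots,n-1$ and $M_{ij} = 0$ otherwise. The matrix $M$ has the dihedral symmetry of the path graph, so I look for a symmetric encoding family. A natural choice combines (a) the ``cut'' encodings $\vec m^{(k)}$ with $m^{(k)}_i = +1$ for $i \leq k$ and $-1$ for $i > k$ (giving at most $n$ encodings) with (b) a complementary family of ``interval'' encodings that are $-1$ on a contiguous window of positions and $+1$ elsewhere (bounded by $n$ further encodings). The cut family handles the bulk of the nearest-neighbor contributions, while the interval family corrects the non-nearest-neighbor correlations that cut encodings alone fail to cancel already for $n \geq 4$. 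The weights $\lambda_{\vec m} \geq 0$ are determined by a small linear system exploiting the symmetry of $M$, and summing them yields $2\mu$ exactly; the encoding cost is at most $2n$ by construction.

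For optimality, it suffices by strong duality to exhibit a feasible $\vec y^*$ of the dual \eqref{eq:dualLP} with $\innerp{M}{\vec y^*} = \mu\sum_i y^*_{i,i+1} = 2\mu$. Writing $Y \in \HTL(\RR^n)$ for the associated symmetric zero-diagonal matrix, the dual constraint reads $\vec m^T Y \vec m \leq 2$ for every $\vec m \in \{\pm 1\}^n$. I would use a translation-symmetric ansatz in which $Y_{ij}$ depends on $|i-j|$ (plus a mild boundary correction), with positive entries at $|i-j| = 1$ summing to $2$ and negative entries at $|i-j| = 2$ that suppress the quadratic form on structured encodings. Complementary slackness with the primal construction above prescribes which sign patterns must saturate the dual constraint, which should pin $\vec y^*$ down up to a few free parameters that can be fixed to secure feasibility.

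The principal obstacle is verifying dual feasibility uniformly in $n$: in general, maximizing $\vec m^T Y \vec m$ over $\{\pm 1\}^n$ is a \textsc{MaxCut} problem and NP-hard, so the argument must genuinely rely on the translation-invariant structure of $Y$ to reduce the check to a handful of canonical extremal encodings (constant, alternating, and block-constant patterns). As a fallback, I would attempt an inductive construction in $n$ that builds both the primal decomposition and the dual certificate for $n$ qubits from those for $n-1$ qubits by appending a single encoding together with a single additional dual variable, preserving both the gate-time increment structure and the complementary slackness pattern.
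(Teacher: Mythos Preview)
Your plan diverges from the paper in both parts, and in the optimality part there is a genuine gap.

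\textbf{Construction.} The paper does not use cut or interval encodings at all. Instead it writes the subdiagonal matrix (for even $n$; odd $n$ is handled by padding) as a sum of two ``staircase'' pieces,
\[
A \;=\; \varphi\Bigl(\bigoplus_{i=1}^{n/2} E_2\Bigr) \;+\; \varphi\Bigl(E_1 \oplus \bigoplus_{i=1}^{n/2-1} E_2 \oplus E_1\Bigr),
\]
i.e.\ the odd nearest-neighbour pairs $(1,2),(3,4),\dots$ and the even ones $(2,3),(4,5),\dots$ separately. Each summand is a direct sum of $E_2$ and $E_1$ blocks and is therefore handled directly by \cref{thrm:tensor_GZZ} and \cref{thm:apdx_NNcoupling}: the encodings are simply the rows of a column-duplicated Hadamard matrix with equal weights $1/d$. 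This gives gate time $\varphi/c$ per summand (total $2\varphi/c$) and encoding cost $2^{\lceil\log_2(n)-1\rceil}+2^{\lceil\log_2(n+2)-1\rceil}\leq 2n$, with no linear system to solve. Your cut-plus-interval ansatz might also be made to work, but as written the weights are unspecified and the claim that they are nonnegative and sum to exactly $2\mu$ is not established.

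\textbf{Optimality.} Here you are making the problem much harder than it is. A translation-invariant dual certificate with support at distances $1$ and $2$ forces you, as you correctly note, into a \textsc{MaxCut}-type verification over all $\vec m\in\{\pm1\}^n$, and neither your main plan nor your inductive fallback resolves this. The paper's key observation is that a dual certificate supported on only \emph{three} qubits already suffices: take
\[
\vec y = (1,\,-1,\,1,\,0,\dots,0)^T,
\]
that is, $y_{12}=1$, $y_{13}=-1$, $y_{23}=1$, and all other entries zero. Then $\innerp{M}{\vec y} = \mu\,(y_{12}+y_{23}) = 2\mu$, and the dual constraint $(V^T\vec y)_{\vec m} = m_1m_2 - m_1m_3 + m_2m_3 \leq 1$ depends only on $(m_1,m_2,m_3)$. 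By the symmetry $\vec m\mapsto -\vec m$ there are exactly four cases to check, each giving value at most $1$. No structured ansatz, no induction, and no complementary-slackness bookkeeping is needed: padding the $n=3$ certificate with zeros is the whole argument.
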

In \cref{sec:LP_approx} we define \cref{alg:eff_heu} and introduce \ac{LP}~\eqref{eq:LP_poly_hierarchy}, a polynomial time heuristic to synthesize $\Gate{GZZ}$ gates with small, but not necessarily minimal, gate time.
This heuristic does not rely on any further assumptions and is applicable for arbitrary $J,A \in \HTL(\RR^n)$.
In \cref{sec:numerics} we show numerically that this heuristic leads to $\Gate{GZZ}$ gate times close to the optimum while solving the synthesis problem much faster.

Furthermore, we proof lower and upper bounds on the $\Gate{GZZ}$ gate time.
\begin{theorem*}[\cref{thm:gate-time}]
 The optimal total gate time of $\Gate{GZZ}(A)$ with $A \in \HTL (\RR^n)$ is lower and upper bounded by
 \begin{equation}
  \linfnorm{ A \oslash J } \leq \vec 1^T \vec \lambda^* \leq \lonenorm{A \oslash J} \, .
 \end{equation}
\end{theorem*}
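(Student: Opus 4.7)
The plan is to prove the two bounds separately, using LP duality for the lower bound and an explicit construction for the upper bound.

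For the lower bound, I would invoke the dual LP~\eqref{eq:dualLP} and weak duality. Let $(i^*,j^*)$ be a pair achieving $|M_{i^*j^*}| = \linfnorm{A\oslash J}$, and set $\vec y \coloneqq \sign(M_{i^*j^*})\, \vec e_{(i^*,j^*)}$, where $\vec e_{(i^*,j^*)}$ is the standard basis vector indexing the pair $(i^*,j^*)$ inside $\RR^{\binom{n}{2}}$. Then the dual feasibility constraint reads
\begin{equation*}
 (V^T \vec y)_{\vec m} = \sign(M_{i^*j^*})\, m_{i^*} m_{j^*} \in \{-1,+1\} \, ,
\end{equation*}
so $V^T\vec y \leq \vec 1$ holds, and $\innerp{M}{\vec y} = |M_{i^*j^*}| = \linfnorm{A\oslash J}$. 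Weak duality then gives $\linfnorm{A\oslash J} \leq \vec 1^T\vec\lambda^*$.

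For the upper bound, I would exhibit an explicit feasible primal solution. The idea is that summing $\vec m \vec m^T$ uniformly over the $2^{n-1}$ encodings with a fixed value of $m_i m_j$ averages out every other off-diagonal entry by parity. Concretely, define
\begin{equation*}
 \lambda_{\vec m} \coloneqq \sum_{i<j}\frac{|M_{ij}|}{2^{n-1}}\,\mathds{1}\!\left[m_i m_j = \sign(M_{ij})\right].
\end{equation*}
For the total-coupling constraint, I compute
\begin{equation*}
 (\mathcal V \vec\lambda)_{kl} = \sum_{i<j}\frac{|M_{ij}|}{2^{n-1}}\sum_{\vec m: m_i m_j = \sign(M_{ij})} m_k m_l \, ,
\end{equation*}
and observe that for $(k,l)\neq (i,j)$ the inner sum factors into a sum of the form $\sum_{m_r \in\{\pm 1\}} m_r = 0$ because at least one index among $\{k,l\}\setminus\{i,j\}$ is free, while for $(k,l)=(i,j)$ the inner sum equals $2^{n-1}\,\sign(M_{ij})$. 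Hence $\mathcal V \vec\lambda = M$, $\vec\lambda \geq 0$, and the total gate time is
\begin{equation*}
 \vec 1^T\vec\lambda = \sum_{i<j} |M_{ij}| \cdot \frac{1}{2^{n-1}} \cdot 2^{n-1} = \lonenorm{A\oslash J}\, .
\end{equation*}
After quotienting by the symmetry $\vec m \sim -\vec m$ from \cref{eq:m_sym}, this gives a feasible point of the \ac{LP}~\eqref{eq:LP1} of the correct objective value, so $\vec 1^T\vec\lambda^* \leq \lonenorm{A\oslash J}$.

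No step is genuinely hard: both directions are one-line LP arguments once the right certificate is chosen. The only mildly delicate point is the orthogonality calculation in the upper bound, i.e.\ verifying that conditioning on the single bilinear constraint $m_i m_j = \sign(M_{ij})$ really zeros out the inner sums for all other index pairs. This is a short parity argument that I would handle by distinguishing the cases $\{k,l\}\cap\{i,j\}=\emptyset$ and $|\{k,l\}\cap\{i,j\}|=1$, in each case isolating a free sign variable whose $\pm 1$ sum vanishes.
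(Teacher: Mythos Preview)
Your proof is correct. Both bounds are established by clean certificates, though the mechanisms differ from the paper's.

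For the lower bound, the paper argues directly from the primal constraint (this is \cref{lem:triv_low_bound}): since $V$ has entries in $\{-1,+1\}$ and $\vec\lambda^* \geq 0$, one gets $\linfnorm{M} = \linfnorm{V\vec\lambda^*} \leq \vec 1^T\vec\lambda^*$ immediately. Your dual-certificate version is the same inequality read through weak duality; neither is deeper than the other.

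For the upper bound, both you and the paper decompose $M = \sum_{i<j} M_{ij}\, e_{(i,j)}$ and synthesize each basis element at cost $|M_{ij}|$. The difference is in how the basis element is realized. The paper invokes the Hadamard-matrix construction from the proof of \cref{thrm:spanning_set}: for each pair $(i,j)$ it uses only $d \in \LandauO(n)$ encodings (the rows of a suitably column-duplicated Hadamard matrix) to isolate that entry. Your construction instead spreads the weight uniformly over all $2^{n-1}$ encodings satisfying $m_i m_j = \sign(M_{ij})$ and relies on a parity cancellation for the other entries. The resulting bound is identical, and your argument is more self-contained. The paper's version buys something else, though: its feasible point has support of size $\LandauO(n^3)$ rather than exponential, and it is exactly this Hadamard-based decomposition that feeds into the polynomial-time heuristic of \cref{sec:LP_approx}. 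So the paper's choice is motivated by the surrounding narrative, while yours is the cleaner standalone proof.
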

Here, the upper bound scales quadratic in $n$ for a constant matrix $M = A \oslash J$.
This upper bound is loose in the sense that it also holds for the $\Gate{GZZ}$ gates constructed by the heuristic in \cref{sec:LP_approx}.
Therefore, we tighten the upper bound to a linear scaling in $n$.
\begin{conjecture*}[\cref{conj:gate-time}]
 The optimal gate time of $\Gate{GZZ}(A)$ with $A \in \HTL (\RR^n)$ is tightly upper bounded by
 \begin{equation}
  \vec 1^T \vec \lambda^* \leq \linfnorm{ A \oslash J } \cdot \begin{cases}
   n \, , &\text{for odd } n\, ,\\
   n-1 \, , &\text{for even } n\, .
  \end{cases}
 \end{equation}
\end{conjecture*}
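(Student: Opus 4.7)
The plan is to reduce the conjecture, via LP duality, to a combinatorial extremal inequality and then attack it by a cut-based decomposition argument. By strong duality applied to the LP \eqref{eq:LP1}--\eqref{eq:dualLP}, we have $\vec 1^T \vec\lambda^*(M) = \max\{\innerp{M}{\vec y} : V^T \vec y \leq \vec 1\}$ with $M \coloneqq A \oslash J$. Both sides are positively homogeneous in $M$, so we normalize $\linfnorm{M} = 1$ and swap the two maxima,
\begin{equation}
\max_{\linfnorm{M} \leq 1}\vec 1^T \vec\lambda^*(M) \;=\; \max_{V^T \vec y \leq \vec 1}\max_{\linfnorm{M} \leq 1}\innerp{M}{\vec y} \;=\; \max\bigl\{\lonenorm{\vec y} : V^T \vec y \leq \vec 1\bigr\},
\end{equation}
using Hölder duality between $\ell_\infty$ and $\ell_1$. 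The conjecture is thus equivalent to the purely combinatorial statement that every $\vec y \in \RR^{\binom{n}{2}}$ satisfying $\sum_{i<j} m_i m_j y_{ij} \leq 1$ for all $\vec m \in \{\pm 1\}^n$ obeys $\lonenorm{\vec y} \leq n$ for odd $n$ and $\lonenorm{\vec y} \leq n-1$ for even $n$.

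Equivalently, on the primal side the worst case over $\linfnorm{M} \leq 1$ is attained at an extreme point of the $\ell_\infty$ unit ball, i.e.\ a $\pm 1$-valued HTL matrix $M$. Reading such $M$ as a signed edge-labeling of $K_n$ and each $\vec m \in \{\pm 1\}^n$ as a cut $S \subseteq [n]$ (with $(\vec m \vec m^T)_{ij}$ equal to $+1$ on monochromatic and $-1$ on bichromatic pairs), I would first treat the extremal case where $M$ has all off-diagonal entries equal to $-1$: by vertex symmetry, the uniform distribution over all $\binom{n}{\lfloor n/2 \rfloor}/2$ balanced cuts yields a scalar multiple of this $M$, and a direct count shows that the scaling produces total weight exactly $n-1$ for even $n$ and $n$ for odd $n$. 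This simultaneously establishes tightness of the conjectured bound.

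To extend the result to arbitrary $\pm 1$ sign patterns, I would exploit the $\ZZ_2^{n-1}$ symmetry of the LP under $M \mapsto M \circ (\vec s \vec s^T)$ with $\vec s \in \{\pm 1\}^n$, which permutes decompositions by $\vec m \mapsto \vec m \circ \vec s$. This symmetrization partitions the $\pm 1$ HTL matrices into equivalence classes but does not reduce every orbit to the all-$(-1)$ case — the main obstacle is precisely to argue that no other sign pattern produces a strictly larger LP optimum. A promising route is an inductive peeling argument on $n$: on the dual side, pigeonhole gives a vertex $v$ with $\sum_j |y_{vj}| \leq 2\lonenorm{\vec y}/n$; removing $v$ and invoking the bound on the $(n-1)$-qubit sub-LP would yield a recursion whose parity accounts naturally for the $n$ versus $n-1$ dichotomy (for even $n$ the vertices pair up cleanly, while odd $n$ leaves one over). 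Verifying that this peeling step does not lose constants appears to require the full structure of the dual polytope and is the delicate part of the argument; the numerical experiments in \cref{sec:numerics}, which confirm the conjecture for small $n$, constrain the size of any correction term and support the feasibility of such a scheme.
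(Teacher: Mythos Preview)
This statement is a \emph{conjecture}, not a theorem: the paper explicitly does not prove it, and \cref{app:proof_problems} is devoted to explaining why several natural attacks (including induction, dualization, and an invexity argument) fail. So the relevant question is whether your proposal goes further than the paper's evidence, and it does not.

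Your reduction is correct and coincides with what the paper does. The swap of maxima to arrive at $\max\{\lonenorm{\vec y} : V^T \vec y \leq \vec 1\}$ is exactly \ac{LP}~\eqref{eq:upper_bound_LP}; the treatment of $M = -E_n$ via the uniform mixture of balanced cuts is the content of \cref{lem:lin_scal_ana_dual,lem:lin_scal_ana_prim}; and the $\ZZ_2^{n-1}$ symmetry $M \mapsto M \circ (\vec s\vec s^T)$ is what the paper uses in \cref{prop:worst_case_sol} to extend this to all $M = -\vec m\vec m^T$. None of this is new, and crucially it does not touch the hard part: showing that no \emph{other} $\pm 1$ sign pattern exceeds the bound.

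Your peeling argument has a concrete and fatal gap. Removing a vertex $v$ with minimal row weight and averaging over $m_v = \pm 1$ does show that the restriction $\vec y'$ is dual-feasible for $n-1$ qubits, and pigeonhole gives $\sum_j |y_{vj}| \leq \tfrac{2}{n}\lonenorm{\vec y}$, hence the recursion $\lonenorm{\vec y} \leq \tfrac{n}{n-2}\,B_{n-1}$ where $B_{n-1}$ is the conjectured bound at $n-1$. For odd $n$ this yields $\tfrac{n}{n-2}(n-2) = n$, which is exactly right. But for even $n$ it gives $\tfrac{n}{n-2}(n-1)$, which already at $n=4$ is $6$ rather than the required $3$, and in general exceeds $n-1$ for all even $n \geq 4$. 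So the parity dichotomy does not ``account naturally'' for the $n$ versus $n-1$ gap; it is precisely where the induction breaks. The paper reports the same obstruction in \cref{app:proof_problems}: ``We tried an inductive proof which turned out to be intricate due to the additional degrees of freedom in each induction step.'' Your appeal to numerics to ``constrain the size of any correction term'' is not an argument --- it is the same evidence the paper already presents in \cref{fig:worst_case_GZZ}. As it stands, the proposal reproduces the paper's partial progress and leaves the conjecture open at exactly the same point.
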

We provide evidence for this conjecture using an explicit construction (\cref{prop:worst_case_sol}) that realizes the conjectured upper bound for any $n$, as well as numerical evidence for $n\leq 8$ (\cref{fig:worst_case_GZZ}).
Unfortunately, we were not able to prove this result and state the challenges in \cref{app:proof_problems}.

\section{Synthesizing time-optimal GZZ gates is \NP-hard}
\label{sec:LP_is_NP}
In this section, we investigate the complexity of solving the gate synthesis problem stated as \ac{LP}~\eqref{eq:LP1}.
We observe that \ac{LP}~\eqref{eq:LP1} is an optimization over the convex cone generated by
\begin{equation}\label{def:outprod}
	\mathcal E_n \coloneqq \Set*{ \vec m \vec m^T \given \vec m \in \{-1,+1\}^n , m_n = +1} \, ,
\end{equation}
which is the set of outer products generated by all possible encodings $\vec m$.
Due to the symmetry \cref{eq:m_sym} we can uniformly fix the value of one entry of $\vec m$.
We chose the convention $m_n=+1$.
In the literature $\mathcal E_n$ is also known as the \emph{elliptope} of rank one matrices \cite{sdp_cut_polytope}.
In the following, we consider the polytope
\begin{equation}
 \operatorname{conv}(\mathcal E_n) \coloneqq \Set*{ \sum_i \lambda_i \vec r_i \given \sum_i \lambda_i = 1 , \lambda_i \geq 0, \vec r_i \in \mathcal E_n \ } \, ,
\end{equation}
and show the connection to graph theory, in particular to the cuts of graphs.
\begin{definition}[cut polytope \cite{Barahona1986}]
\label{def:cut_polytope}
Let $K_n = (V_n , E_n)$ be a complete graph with $n$ vertices.
Denote $\delta (X)$ the set of all edges with one endpoint in $X \subset V_n$ and the other endpoint in its complement $\bar X$,
i.e., $\delta (X)$ defines the cut between $X$ and $\bar X$.
Let $\chi^{\delta (X)} \in \{0, 1 \}^{|E_n|}$ denote the characteristic vector of a cut, with $\chi_e^{\delta (X)} = 1$ if $e \in \delta (X)$ and $\chi_e^{\delta (X)} =0$ otherwise.
We define the \emph{cut polytope} as the convex hull of the characteristic vectors
\begin{equation}
\operatorname{CUT}_n \coloneqq \operatorname{conv} \Set*{ \chi^{\delta (X)} \in \{0, 1 \}^{|E_n|} \given X \subseteq V_n } \, .
\end{equation}
\end{definition}
\begin{lemma}
\label{lem:cut_ellip}
For all $n$, $\operatorname{CUT}_n$ is isomorphic to $\operatorname{conv}(\mathcal E_n)$.
\end{lemma}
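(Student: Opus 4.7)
The plan is to exhibit an explicit affine isomorphism $\phi$ between the two polytopes by sending the vertices of $\operatorname{conv}(\mathcal E_n)$ (i.e., the rank-one outer products $\vec m\vec m^T$) bijectively onto the vertices of $\operatorname{CUT}_n$ (i.e., the characteristic cut vectors $\chi^{\delta(X)}$), then invoking affine linearity.

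First, I would set up the vertex bijection. Given an encoding $\vec m\in\{-1,+1\}^n$ with $m_n=+1$, define the subset $X_{\vec m}\coloneqq\{i\in V_n \given m_i=-1\}\subseteq V_n\setminus\{n\}$. The normalization $m_n=+1$ makes this assignment a bijection between the set of admissible encodings and the subsets $X\subseteq V_n$ with $n\notin X$. On the cut side, one observes that $\delta(X)=\delta(\bar X)$, so restricting to subsets $X$ with $n\notin X$ picks out exactly one representative per cut. Hence both vertex sets are canonically indexed by subsets of $V_n\setminus\{n\}$, and the map $\vec m\vec m^T\mapsto \chi^{\delta(X_{\vec m})}$ is a bijection on vertices.

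Second, I would identify this vertex bijection as the restriction of the affine map
\begin{equation}
 \phi:\HTL(\RR^n)\to\RR^{\binom{n}{2}},\qquad \phi(M)\coloneqq \tfrac{1}{2}\bigl(\vec 1-\vec v(M)\bigr),
\end{equation}
where $\vec v$ is the vectorization of the upper triangular part used earlier in the paper. The key computation is the elementary identity $m_i m_j = 1-2\,\mathds 1[m_i\neq m_j]$, so $\tfrac{1}{2}(1-m_im_j)$ equals $1$ precisely when the edge $(i,j)$ crosses the partition $(X_{\vec m},\bar X_{\vec m})$ and $0$ otherwise. Thus $\phi(\vec m\vec m^T)=\chi^{\delta(X_{\vec m})}$, confirming that $\phi$ sends vertices of $\operatorname{conv}(\mathcal E_n)$ to vertices of $\operatorname{CUT}_n$ as claimed.

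Third, I would extend from vertices to polytopes. Since $\phi$ is an affine bijection of $\RR^{\binom{n}{2}}$ with inverse $\vec y\mapsto \vec 1-2\vec y$, it preserves convex combinations, so $\phi(\operatorname{conv}(\mathcal E_n))=\operatorname{conv}(\phi(\mathcal E_n))=\operatorname{CUT}_n$, and the restriction $\phi\big|_{\operatorname{conv}(\mathcal E_n)}$ is an affine isomorphism. The only subtlety to pin down carefully — and really the only thing that could go wrong — is that fixing $m_n=+1$ and $n\notin X$ genuinely gives the same index set on both sides, so that no cut is double-counted and no encoding is missed; this is immediate from the two $\ZZ_2$ symmetries $\vec m\mapsto -\vec m$ and $X\mapsto\bar X$ acting freely on the non-normalized vertex sets. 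No serious obstacle is expected; the argument is essentially the standard correspondence between $\pm 1$-valued and $0/1$-valued encodings of a bipartition.
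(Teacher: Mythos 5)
Your proposal is correct and takes essentially the same route as the paper: both identify the affine map $\chi_e = \tfrac{1}{2}(1 - m_i m_j)$ between the $\pm 1$ outer-product vertices and the $0/1$ cut vectors, observe it is a bijection on vertices, and extend by affinity to the full polytopes. You are slightly more explicit than the paper about why the normalization $m_n = +1$ on one side matches the $X \leftrightarrow \bar X$ symmetry of cuts on the other, which is a nice clarification but not a different argument.
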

\begin{proof}
For each $X \subset V_n$ we set $m_i = +1$ if $i \in X$ and $m_i = -1$ if $i \in \bar X$.
Note, that there are $2^{n-1}$ different pairs of $X$ and $\bar X$.
We then have $m_i m_j = -1$ if $i \in X$ and $j \in \bar X$ (or the other way around), and $m_i m_j = +1$ if $i, j \in X$ or $i, j \in \bar X$.
So the characteristic vector can be written as $\chi_e^{\delta (X)} = (1-m_i m_j)/2$ for each edge $e \in E_n$ connecting vertices $i$ and $j$.
This is clearly a bijective affine map between the vertices and thus $\operatorname{CUT}_n$ is isomorphic to $\operatorname{conv}(\mathcal E_n)$.
\end{proof}
The following decision problems are membership problems, where the task is to decide if a given element $\vec x$ belongs to a set or not.
In our case $\vec x$ is a vector and the set is a polytope.
\begin{problem}[$\operatorname{CUT}_n$ membership]\label{p:MemCut}\hfill\vspace{-1.4\baselineskip} \\
	\begin{description}[noitemsep,leftmargin=0.5cm,font=\normalfont]
		\item [Instance]
		The adjacency matrix $M \in \HTL(\QQ^n_{\geq 0})$ of a weighted undirected graph with non-negative weights.
		\item [Question]
		Is $M \in \operatorname{CUT}_n$?
	\end{description}
\end{problem}
 \begin{problem}[$\operatorname{conv}(\mathcal E_n)$ membership]\label{p:MemEllip}\hfill\vspace{-1.4\baselineskip} \\
	\begin{description}[noitemsep,leftmargin=0.5cm,font=\normalfont]
		\item [Instance]
		The matrix $M \in \HTL(\QQ^n)$.

		\item [Question]
		Is $M \in \operatorname{conv}(\mathcal E_n)$? That is, does there exist a decomposition $M = \sum_i \lambda_i \vec r_i$ with $\vec r_i \in \mathcal E_n$ such that $\lambda_i \geq 0$ and $\sum_i \lambda_i = 1$?
	\end{description}
\end{problem}
It is well known that the membership problem of the cut polytope, \cref{p:MemCut}, and \cref{p:MemEllip} are \NP-complete \cite{Deza1997a,Garey_2002_Computers,E_Nagy2013}.
Next, we state the decision version of our gate synthesis optimization.
\begin{problem}[time-optimal multi-qubit gate synthesis]\label{p:TOMQs}\hfill\vspace{-1.4\baselineskip} \\
	\begin{description}[noitemsep,leftmargin=0.5cm,font=\normalfont]
		\item [Instance]
		The matrix $M = A \oslash J \in \HTL(\QQ^n)$ and a constant $K \in \QQ_{\geq 0}$.

		\item [Question]
		Is there a decomposition $M = \sum_i \lambda_i \vec r_i$ with $\vec r_i \in \mathcal E_n$ such that $\lambda_i \geq 0$ and $\sum_i \lambda_i \leq K$?
	\end{description}
\end{problem}

\begin{theorem}\label{thm:hardness}
\cref{p:TOMQs}, which is the decision version of the \ac{LP}~\eqref{eq:LP1}, is \NP-complete.
\end{theorem}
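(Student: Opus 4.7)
The proof follows the standard template: show Problem~\ref{p:TOMQs} is in \NP, then reduce a known \NP-complete problem to it in polynomial time. Membership in \NP\ is witnessed by a feasible decomposition $\vec \lambda$ of \ac{LP}~\eqref{eq:LP1} with $\vec 1^T \vec \lambda \leq K$. Feasibility is guaranteed by Theorem~II.2 of~\cite{basler_synthesis_2022} whenever $M \in \HTL(\QQ^n)$. By Proposition~II.3 of~\cite{basler_synthesis_2022}, there is a basic optimal solution with at most $\binom{n}{2}$ non-zero entries, and standard linear programming theory guarantees that its bit complexity is polynomial in the input size. So the witness has polynomial length, and verifying $V \vec \lambda = \vec v(M)$ together with $\vec 1^T \vec \lambda \leq K$ is routine arithmetic.

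For hardness, the plan is to reduce from \cref{p:MemEllip}, which is \NP-complete. The reduction simply maps an instance $M \in \HTL(\QQ^n)$ of \cref{p:MemEllip} to the instance $(M, K=1)$ of \cref{p:TOMQs}. The forward direction ($M \in \operatorname{conv}(\mathcal E_n)$ implies a yes-instance of \cref{p:TOMQs}) is immediate, since any convex decomposition supplies a feasible $\vec \lambda$ with $\vec 1^T \vec \lambda = 1$. For the converse I would establish a short \emph{padding lemma}: any non-negative decomposition $M = \sum_{\vec m} \lambda_{\vec m} \vec m \vec m^T$ with $s \coloneqq \vec 1^T \vec \lambda \leq 1$ can be promoted to a convex one by adding the uniform weight $\mu_{\vec m} \coloneqq (1-s)/2^{n-1}$ on every encoding with $m_n = +1$, so that $\vec \lambda' \coloneqq \vec \lambda + \vec \mu$ has $\vec 1^T \vec \lambda' = 1$.

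The padding lemma reduces to a single symmetry calculation, namely $\sum_{\vec m\,:\,m_n=+1} \vec m \vec m^T = 2^{n-1}\, I$, because each off-diagonal entry is a Rademacher average that vanishes. Since the identity has zero off-diagonal, the added mass affects only diagonal entries; equivalently $V \vec \mu = 0$ in the notation of~\eqref{eq:lin_op}. Hence $\vec \lambda'$ is feasible with $V \vec \lambda' = \vec v(M)$ and additionally satisfies the convexity constraint, certifying $M \in \operatorname{conv}(\mathcal E_n)$ and closing the reduction. The main conceptual step is recognizing this symmetry, which expresses the fact that the cone $\cone(\mathcal E_n)$ has nontrivial ``vertical slack'' after projecting out the diagonal, so that cone membership with total weight $\le 1$ is equivalent to membership in the convex hull; once this is in place, everything else is bookkeeping.
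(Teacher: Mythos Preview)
Your proof is correct and follows the same approach as the paper: \NP\ membership via a polynomial-size basic feasible solution, and \NP-hardness by reducing \cref{p:MemEllip} to \cref{p:TOMQs} via $M \mapsto (M,\,K{=}1)$, together with a padding step that promotes $\sum_{\vec m}\lambda_{\vec m} \le 1$ to equality. The only difference is in how the padding is realized: the paper invokes the Hadamard-based zero-matrix decomposition of \cref{thrm:tensor_GZZ} (with $\vec k = \vec 1_n$), whereas your uniform-weight identity $\sum_{\vec m:\,m_n=+1}\vec m\vec m^T = 2^{n-1}\1$ is a more elementary and self-contained way to achieve the same effect.
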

\begin{proof}
A solution of \cref{p:TOMQs} can be verified in polynomial time since there always exists a decomposition $\sum_i \lambda_i \vec r_i$ of $M$ with minimal $\sum_i \lambda_i$ which has at most $\binom{n}{2} = n/2(n-1)$ non-zero terms \cite[Proposition~II.3]{basler_synthesis_2022}.
Therefore, \cref{p:TOMQs} is \NP.

To show that \cref{p:TOMQs} is \NP-hard, we construct a polynomial-time mapping reduction from \cref{p:MemEllip} to \cref{p:TOMQs}.
Given the matrix $M \in \HTL(\QQ^n)$ and a constant $K \in \QQ_{\geq 0}$ as an instance for \cref{p:TOMQs}, let $\lambda_i \geq 0$ be the positive coefficients of the decomposition.
If we find $\sum_i \lambda_i < K$, then we can always add additional $\lambda$'s such that equality holds.
We choose the additional $\lambda$'s as the coefficients of the decomposition for the all-zero matrix, see e.g.\ \cref{thrm:tensor_GZZ} below with $\vec k = \vec 1_n$ for an explicit construction.
We define the matrix $M^\prime \coloneqq M/K$ and the positive coefficients $\lambda^\prime_i \coloneqq \lambda_i/K$.
Then $M^\prime = \sum_i \lambda^\prime_i \vec r_i$ with $\vec r_i \in \mathcal E_n$ and $\sum_i \lambda^\prime_i = 1$.
\end{proof}

\section{Time-optimal GZZ synthesis for special instances}
\label{sec:ana_GZZ}
Although, solving the \ac{LP}~\eqref{eq:LP1} is \NP-hard we present explicit optimal solutions for certain families of instances which is equivalent to constructing time-optimal $\Gate{GZZ}$ gates.
The constructions of this section yield a qubit-independent total $\Gate{GZZ}$ time which satisfies the optimal lower bound of \cref{lem:triv_low_bound} below.
Moreover, we show that some $\Gate{GZZ}$ gates can be synthesized with an encoding cost independent of the number of qubits.
However, most of these constructions assume constant values for the elements of the band-diagonal of the physical coupling matrix $J$.
These assumptions are relaxed throughout this section providing explicit optimal solutions for physically relevant cases.
These results build the foundation for the heuristic algorithm for fast $\Gate{GZZ}$ gate synthesis in the next section.

By $\lpnorm{M}$ we denote the $\ell_p$-norm of a symmetric matrix $M$, which is given as the $\ell_p$-norm of a vector $\vec v(M)$ containing all lower/upper triangular matrix elements in some order.
First, we proof a lower bound on the optimal total $\Gate{GZZ}$ time which can be used to verify time-optimality.
\begin{lemma}
\label{lem:triv_low_bound}
For any $M \in \HTL(\RR^n)$ the optimal objective function value of the \ac{LP}~\eqref{eq:LP1} is lower bounded by
 \begin{equation}
 \linfnorm{ M } \leq \vec 1^T \vec \lambda^* \, .
 \end{equation}
\end{lemma}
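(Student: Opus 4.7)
The plan is to give a short, direct argument via the triangle inequality, and then remark that the same bound can be obtained through LP duality, which situates it naturally within the duality framework that the paper will use later in \cref{sec:lin_scaling}.

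First I would unpack the constraint $V\vec\lambda = \vec v(M)$ of the \ac{LP}~\eqref{eq:LP1} entry-wise. Since the columns of $V$ are $\vec v(\vec m\vec m^T)$, any feasible $\vec\lambda$ yields, for every index pair $i<j$, the identity
\begin{equation}
 M_{ij} = \sum_{\vec m} \lambda_{\vec m}\, m_i m_j.
\end{equation}
Because $\lambda_{\vec m}\geq 0$ and $m_im_j\in\{-1,+1\}$, the triangle inequality gives
\begin{equation}
 \abs{M_{ij}} \leq \sum_{\vec m}\lambda_{\vec m}\abs{m_im_j} = \vec 1^T\vec\lambda.
\end{equation}
Taking the maximum over all pairs $i<j$ yields $\linfnorm{M}\leq \vec 1^T\vec\lambda$ for every feasible $\vec\lambda$, and in particular for an optimal one.

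Alternatively, one can exhibit an explicit dual feasible solution. Let $(i^*,j^*)$ attain the maximum in $\linfnorm{M}$ and set $\vec y\in\RR^{\binom{n}{2}}$ to have $y_{i^*j^*}=\sign(M_{i^*j^*})$ and zeros otherwise. Then $\innerp{M}{\vec y}=\linfnorm{M}$, and the constraint $V^T\vec y\leq \vec 1$ of the dual \ac{LP}~\eqref{eq:dualLP} holds, since for every encoding $\vec m$ the corresponding entry of $V^T\vec y$ equals $m_{i^*}m_{j^*}\sign(M_{i^*j^*})\in\{-1,+1\}$. Weak duality then gives the claimed lower bound.

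There is no real obstacle here — the statement is essentially a one-line consequence of the triangle inequality applied to the decomposition. The only care needed is to remember that a feasible solution to the \ac{LP}~\eqref{eq:LP1} is guaranteed to exist (cited from Ref.~\cite{basler_synthesis_2022}), so that the optimum $\vec\lambda^*$ is well-defined.
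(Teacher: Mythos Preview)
Your proof is correct and follows essentially the same approach as the paper: both use that the entries of $V$ are $\pm 1$ and $\vec\lambda\geq 0$ to bound $\linfnorm{M}=\linfnorm{V\vec\lambda}\leq \vec 1^T\vec\lambda$, which is exactly your entry-wise triangle-inequality argument spelled out. The duality-based alternative you give is a nice addition not present in the paper's proof, but the core argument is the same.
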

\begin{proof}
The lower bound can be verified by the fact that the matrix representation $V$ of the linear operator in \cref{eq:lin_op} only has entries $\pm 1$ and that $\vec \lambda^*$ is non-negative.
Thus, it holds that $\linfnorm{ M } = \linfnorm{ V \vec \lambda^* } \leq \vec 1^T \vec \lambda^*$.
\end{proof}
Next, we provide calculation rules for the coupling matrix $A\in \HTL(\RR^n)$ of the $\Gate{GZZ}(A)$ gate. These rules are inherited from matrix exponentials.
Let $A_1, A_2, A_3 \in \HTL(\RR^n)$ then
\begin{enumerate}[label=(\roman*)]\itemsep=0pt
	\item\label{item:sum_GZZ} 
        $\Gate{GZZ} (A_1 + A_2) = \Gate{GZZ} (A_1)\Gate{GZZ} (A_2)$, 
	\item\label{item:direct_sum_GZZ} 
        $\Gate{GZZ} (A_1 \oplus A_2) = \Gate{GZZ} (A_1) \otimes \Gate{GZZ} (A_2)$ and 
	\item\label{item:had_prod_GZZ} 
        $\Gate{GZZ} (A_1 \circ A_2) = \Gate{GZZ} (A_3)$, where the coupling matrices can be decomposed as
	    \begin{equation*}
		    A_1 \coloneqq \sum_{k=1}^{d_1} \lambda_{\vec m_k} \vec m_k \vec m_k^T, \quad A_2 \coloneqq \sum_{k=1}^{d_2} \beta_{\vec v_k} \vec v_k \vec v_k^T, \quad A_3 \coloneqq \sum_{k=1}^{d_3} \gamma_{\vec w_k} \vec w_k \vec w_k^T \, ,
	    \end{equation*}
	    with $d_3 = d_1 d_2$, $\vec w_k = ( \vec m_i \circ \vec v_j)_{k = d_2 i+j}$ and $\gamma_{\vec w_k} = (\lambda_{\vec m_i} \beta_{\vec v_j})_{k = d_2 i+j}$ for $j = 1, \dots , d_2$ and $i = 0, \dots , d_1-1$.
\end{enumerate}
In \cref{item:had_prod_GZZ} we can also express $\vec \gamma = \vec \lambda \otimes \vec \beta$ with the Kronecker product.
Note, that these generated $\Gate{GZZ}$ gates are not necessarily time-optimal.

We denote the $k \times k$ identity matrix by $\1_{k}$, the $k$ dimensional all-ones vector by $\vec 1_k$ and the $k \times k$ matrix of ones with vanishing diagonal by $E_k \coloneqq \vec1_{k} \vec1_{k}^T - \1_k$.
With the next two lemmas, we bound the encoding cost and total gate time for special cases of \cref{item:sum_GZZ} and \cref{item:direct_sum_GZZ}, where the total gate time is constant.
These results will provide a basis for all other constructions.
The following result constructs total coupling matrices $A \in \HTL (\RR^n)$ on arbitrary subsets of qubits.
\begin{lemma}
\label{thrm:tensor_GZZ}
 Let the coupling matrix $J$ be constant, i.e.\ $J_{ij} = c \in \RR_{\geq 0}$ for all $i,j \in [n]$ with $i\neq j$.
 For any $\varphi \in \RR$ and $s \in \ZZ_{\geq 0}$ the $\Gate{GZZ}(A)$ with the matrix
\begin{equation}\label{eq:tensor_GZZ}
 A = \varphi \bigoplus_{i=1}^s E_{k_i} \, ,
\end{equation}
 with $k_i \geq 1$ for $i=1, \dots , s$ has the encoding cost of $d = 2^{\lceil \log_2 (s) \rceil} \leq 2s$ and constant total $\Gate{GZZ}$ time $\varphi / c$.
 This total gate time is optimal.
\end{lemma}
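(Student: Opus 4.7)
The plan is to give an explicit optimal decomposition of $M = A \oslash J = (\varphi/c) \bigoplus_{i=1}^s E_{k_i}$ into $d = 2^{\lceil \log_2 s \rceil}$ rank-one terms $\vec m_j \vec m_j^T$ with uniform weights $\lambda_j = \varphi/(cd)$, and then to match this gate time against the lower bound from \cref{lem:triv_low_bound}. The key idea is to lift orthogonal $\pm 1$ vectors on the $s$ blocks to $\pm 1$ vectors on the $n$ qubits that are constant within each block.

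Concretely, I would take a Sylvester--Hadamard matrix $H \in \{\pm 1\}^{d \times d}$ of size $d = 2^{\lceil \log_2 s \rceil}$, whose existence is standard and gives the stated bound $d \leq 2s$. Choosing any $s$ distinct columns of $H$ yields a matrix $B \in \{\pm 1\}^{d \times s}$ whose columns satisfy $\sum_{j=1}^d B_{j,i} B_{j,i'} = d\,\delta_{i,i'}$. Writing $\vec b_j \in \{\pm 1\}^s$ for the $j$th row of $B$, I define lifted encodings $\vec m_j \in \{\pm 1\}^n$ by $(\vec m_j)_k = (\vec b_j)_i$ whenever $k$ belongs to the $i$th block, so that $\vec m_j$ is constant on each block with sign $(\vec b_j)_i$. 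Setting $\lambda_j = \varphi/(cd)$ for $j=1,\dots,d$ then immediately gives total gate time $\sum_j \lambda_j = \varphi/c$ and encoding cost $d$.

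Next, I would verify the decomposition entry by entry on $\HTL(\RR^n)$. For $k \neq \ell$ in the same block $i$, $(\vec m_j)_k(\vec m_j)_\ell = (\vec b_j)_i^2 = 1$, so $\sum_j \lambda_j (\vec m_j \vec m_j^T)_{k\ell} = \varphi/c$, matching $(\varphi/c) E_{k_i}$. For $k$ in block $i$ and $\ell$ in a different block $i'$, the same sum equals $(\varphi/(cd)) \sum_j (\vec b_j)_i (\vec b_j)_{i'} = 0$ by orthogonality of distinct columns of $H$. This is exactly $M$ on all off-diagonal entries, and since $M \in \HTL(\RR^n)$ the diagonal is irrelevant. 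Hence $\vec \lambda$ is a feasible solution of the \ac{LP}~\eqref{eq:LP1} with objective value $\varphi/c$.

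Finally, for optimality I would apply \cref{lem:triv_low_bound}: every nonzero entry of $M = (\varphi/c)\bigoplus_i E_{k_i}$ equals $\varphi/c$, so $\linfnorm{M} = \varphi/c \leq \vec 1^T \vec \lambda^*$, which is matched by our construction. The only delicate point is the sign convention: the argument requires $\varphi/c \geq 0$ so that $\lambda_j \geq 0$, which is the physically relevant regime and is handled by flipping a sign in $A$ if needed (since $\Gate{GZZ}(-A) = \Gate{GZZ}(A)^\dagger$ has the same gate time). No real obstacle arises; the proof is essentially a careful bookkeeping of the Hadamard orthogonality lifted across blocks.
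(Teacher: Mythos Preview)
Your proposal is correct and follows essentially the same approach as the paper: both select $s$ columns of a Hadamard matrix of size $d = 2^{\lceil \log_2 s \rceil}$, duplicate them to be constant on the $s$ blocks (the paper phrases this as replacing the $i$-th column by $k_i$ copies to form $H^{d\times n}_{\vec k}$), use column orthogonality to obtain $\bigoplus_i E_{k_i}$, and invoke \cref{lem:triv_low_bound} for optimality. Your entry-by-entry verification and your explicit handling of the sign of $\varphi$ are slightly more careful than the paper's ``w.l.o.g.\ $c=\varphi=1$'', but the argument is the same.
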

\begin{proof}
W.l.o.g.\ we set $c = \varphi = 1$.
We have $n \coloneqq \sum_{i=1}^s k_i$ qubits.
Note, that $E_{k_i} = E_{1} = 0$ only contributes to an entry in the diagonal of $A$, and hence this qubit does not participate in the $\Gate{GZZ}(A)$ gate.
We denote the $d \times d$ Hadamard matrix by $H^{d \times d}$ and the matrix consisting of its first $s$ columns by $H^{d \times s}$ where $s \leq n$.
The orthogonality of the columns of any Hadamard matrix yields $(H^{d \times s})^T H^{d \times s} = d\, \1_s$.
Replacing each $i$-th column by $k_i$ copies of it we obtain the $d\times n$-matrix $H^{d \times n}_{\vec k}$ from $H^{d \times s}$.
Then, we attain the diagonal block matrix structure
\begin{equation}
\label{eq:diag_block_struct}
 (H^{d \times n}_{\vec k})^T H^{d \times n}_{\vec k} - d \1_{n}  = d \bigoplus_{i=1}^s E_{k_i} \, .
\end{equation}
We set the diagonal elements on the left-hand side to zero, since they only contribute to an energy offset in the Hamiltonian.
Take each row of $H^{d \times n}_{\vec k}$ as a possible vector $\vec m\in \{-1,1\}^n$
to construct the total coupling matrix
\begin{equation}
\label{eq:tensor_GZZ_2}
 A = \frac{1}{d} \sum_{\vec m\in \operatorname{rows}(H_{\vec k}^{d \times n})}\vec m \vec m^T \, ,
\end{equation}
 i.e., the time steps have been chosen as $\lambda_{\vec m} = \frac 1 d \, [\vec m \in \operatorname{rows}(H^{d \times n}_{\vec k})]$, with the Iverson bracket $[\, \cdot \,]$.
Clearly, we have $\sum_{\vec m} \lambda_{\vec m} = 1$.
If we take the constants $c$ and $\varphi$ into account, we just multiply \cref{eq:tensor_GZZ_2} by $\varphi / c$ which gives a total $\Gate{GZZ}$ time of $\varphi / c$.
Furthermore, this total $\Gate{GZZ}$ time is optimal since $\linfnorm{ M} = \linfnorm{ A \oslash J} = \varphi / c$ satisfies the lower bound of \cref{lem:triv_low_bound}.
\end{proof}
The encoding cost of $\Gate{GZZ} (A)$ considered in this lemma can be reduced if redundant encodings $\vec m = \vec m^\prime \in \{-1,1\}^n$ are present by adding the corresponding time steps $\lambda_{\vec m} + \lambda_{\vec m^\prime}$.
It can be further reduced by using the Hadamard conjecture \cite{paleyn_1933,hedayat_hadamard_1978}.
If the Hadamard conjecture holds, then there exists Hadamard matrices of any dimension divisible by $4$.
It is known that the Hadamard conjecture holds for dimensions $s \leq 668$ \cite{kharaghani_2005,dokovic_2014}, thus the encoding cost can be reduced to $d = 4 \lceil s/4 \rceil \leq s-4$ in this regime.
The encoding cost of the following \cref{thm:apdx_NNcoupling}, \cref{cor:apdx_subdiagonal,cor:apdx_excl_qubit} and the efficient heuristic in \cref{sec:LP_approx} can be reduced in the same fashion.

The assumption in \cref{thrm:tensor_GZZ} of a constant coupling matrix, $J_{i\neq j} = c$, is physically unreasonable.
Therefore, we relax this assumption for block sizes $k_i=2$ which corresponds to pairwise next neighbor couplings.
We use \cref{thrm:tensor_GZZ} to construct time-optimal $\Gate{GZZ}$ gates for a certain family of coupling matrices.

\begin{lemma}\label{thm:apdx_NNcoupling}
 Let $n$ be even and the coupling matrix $J$ be constant on the first subdiagonal (the other elements are arbitrary),
 i.e.\ $J_{ij} = c$ for $i\in [n-1]$ and $j = i\pm 1$.
 Then
 \begin{equation}
  \Gate{GZZ}(A) = \bigotimes_{i=1}^{n/2} \Gate{GZZ}(\varphi E_{2} )  \, ,
 \end{equation}
  has the encoding cost of $d = 2^{\lceil \log_2 (n)-1 \rceil} < n$ and constant total $\Gate{GZZ}$ time $\varphi / c$. 
  This total gate time is optimal.
\end{lemma}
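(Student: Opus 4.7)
The plan is to reduce this lemma to \cref{thrm:tensor_GZZ} by taking all block sizes equal to two, while arguing that the stronger assumption on $J$ made there can be relaxed to the hypothesis available here.

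First, I would invoke the direct-sum calculation rule (ii) above: since $A = \varphi \bigoplus_{i=1}^{n/2} E_2$, the tensor-product factorization $\Gate{GZZ}(A) = \bigotimes_{i=1}^{n/2} \Gate{GZZ}(\varphi E_2)$ is immediate. The real content of the lemma is therefore the encoding-cost bound and the gate-time optimality.

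Second, I would observe that the nonzero entries of $A$ sit only at positions $(2k-1, 2k)$ on the first subdiagonal, and at those positions $J_{ij} = c$ by assumption. Consequently, $M \coloneqq A \oslash J = (\varphi/c) \bigoplus_{i=1}^{n/2} E_2$, and the LP~\eqref{eq:LP1} for this $M$ coincides with the one that arises in \cref{thrm:tensor_GZZ} with the parameters $s = n/2$, $k_i = 2$ for every $i$, and constants $\varphi, c$.

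Third, I would apply the Hadamard-based construction from \cref{thrm:tensor_GZZ} with those parameters. This produces a feasible $\vec\lambda$ with total gate time $\vec 1^T \vec\lambda = \varphi/c$ and encoding cost $d = 2^{\lceil \log_2 (n/2) \rceil} = 2^{\lceil \log_2 n \rceil - 1}$. A quick case distinction on whether $n/2$ is a power of two shows $d < n$ for every even $n \geq 2$. Gate-time optimality is then immediate from \cref{lem:triv_low_bound}, since $\linfnorm{M} = \varphi/c$ is exactly saturated by our solution.

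The main subtlety is the third step: one must verify that the construction of \cref{thrm:tensor_GZZ} still yields a valid LP decomposition under the weaker hypothesis that $J$ is constant only on the first subdiagonal. This is essentially free, because the identity~\eqref{eq:diag_block_struct} shows that the outer-product sum $\sum_{\vec m} \lambda_{\vec m}\vec m \vec m^T$ vanishes identically on all off-block off-diagonal entries. Hence the corresponding LP constraints in $V\vec\lambda = \vec v(M)$ are satisfied there irrespective of the values of $J_{ij}$; only the subdiagonal values of $J$, where constancy is assumed, actually enter the construction.
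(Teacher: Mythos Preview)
Your proposal is correct and follows essentially the same approach as the paper: reduce to \cref{thrm:tensor_GZZ} with $s=n/2$ and $k_1=\dots=k_{n/2}=2$, observe that the off-block entries of $\sum_{\vec m}\lambda_{\vec m}\vec m\vec m^T$ vanish so only the subdiagonal values of $J$ matter, and invoke \cref{lem:triv_low_bound} for optimality. The paper's proof is just a terser version of what you wrote.
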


\begin{proof}
Since the matrix $E_{2}$ only has non-zero entries in the first subdiagonal, the coupling matrix $J$ only needs to be constant there.
The claim follows immediately from \cref{thrm:tensor_GZZ} by setting $k_1=\dots = k_{n/2} = 2$.
Clearly, the total $\Gate{GZZ}$ time is optimal since it saturates the lower bound of \cref{lem:triv_low_bound}.
\end{proof}

\cref{thm:apdx_NNcoupling} guarantees an encoding cost of $d < n$, which is a quadratic saving compared to the general \ac{LP} solution with encoding cost $\binom{n}{2}$.
We note, that
\begin{equation}
 A = \varphi \bigoplus_{i=1}^{n/2} \begin{pmatrix} 0&1\\1&0\end{pmatrix}
\end{equation}
corresponds to parallel $\Gate{ZZ}(\varphi)$ gates, which find applications in simulating molecular dynamics \cite{Cohn21QuantumFilter}.
The assumption of a constant subdiagonal of $J$ can be realized in an ion trap platform by applying an anharmonic trapping potential \cite{johanning_isospaced_2016}.

By combining \ref{item:sum_GZZ}, \cref{thrm:tensor_GZZ} and \cref{thm:apdx_NNcoupling} we obtain the $\Gate{GZZ}$ gate for next neighbor coupling.
\begin{theorem}
\label{cor:apdx_subdiagonal}
 Let the subdiagonal of $J$ be constant, i.e.\ $J_{ij} = c$ for $i\in [n-1]$ and $j = i\pm 1$.
 Then $\Gate{GZZ}(A)$ on $n$ qubits with
 \begin{equation}
  A = \varphi \begin{pmatrix}
		0&1&&&&&\\
		1&0&1&&&&\\
		&1&0&1&&&\\
		&&1&0&1&&\\
		&&&&\ddots&&\\
	\end{pmatrix}  \, ,
 \end{equation}
  has the encoding cost of $d \leq 2n$ (for $n>4$) and constant total $\Gate{GZZ}$ time $2 \varphi / c$. 
  This total gate time is optimal.
\end{theorem}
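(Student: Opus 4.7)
The plan is to decompose the path-adjacency matrix $A$ as a sum of two edge-disjoint matchings, apply Lemma~\ref{thrm:tensor_GZZ} to each summand, and combine them via calculation rule~\ref{item:sum_GZZ}. Optimality is then established by exhibiting an explicit dual certificate for the linear program~\eqref{eq:LP1}.

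Concretely, write $A = A_1 + A_2$, where $A_1$ is supported on the ``odd'' matching $\{(1,2),(3,4),(5,6),\ldots\}$ and $A_2$ on the ``even'' matching $\{(2,3),(4,5),(6,7),\ldots\}$. Each $A_i$, after a permutation of qubits, has the direct-sum form $\varphi\,\bigoplus_j E_{k_j}$ with $k_j\in\{1,2\}$ required by Lemma~\ref{thrm:tensor_GZZ}: an $E_2$ block for each matched edge and an $E_1$ block for each endpoint qubit not covered by the matching. Since the subdiagonal of $J$ is constant along the matching edges, the lemma applies directly and yields $\Gate{GZZ}(A_i)$ with constant gate time $\varphi/c$ and encoding cost $d_i = 2^{\lceil \log_2 s_i\rceil}$, where $s_i$ is the number of blocks in $A_i$. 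A parity-based count gives $s_1+s_2 = n+1$ in both the even and odd cases. Combining the two pieces via rule~\ref{item:sum_GZZ} produces $\Gate{GZZ}(A) = \Gate{GZZ}(A_1)\Gate{GZZ}(A_2)$ with total gate time $2\varphi/c$ and total encoding cost $d_1+d_2 \le 2n$ for $n>4$ (the exact accounting of the power-of-two ceiling excludes a handful of small $n$).

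For optimality of $2\varphi/c$, by weak duality for the pair \eqref{eq:LP1}--\eqref{eq:dualLP} it suffices to exhibit a feasible dual vector $\vec y$ with $\innerp{M}{\vec y} = 2\varphi/c$. I would take $y_{12} = y_{23} = 1$, $y_{13} = -1$, and $y_{ij}=0$ otherwise; note that this certificate touches only three consecutive qubits regardless of $n$. Feasibility $V^T\vec y \le \vec 1$ reduces to the single inequality $m_1 m_2 + m_2 m_3 - m_1 m_3 \le 1$ for all $m_1, m_2, m_3 \in \{\pm 1\}$. Writing $a := m_1 m_2$ and $b := m_2 m_3$ so that $m_1 m_3 = ab$, one computes
\begin{equation*}
 m_1 m_2 + m_2 m_3 - m_1 m_3 \;=\; a + b - ab \;=\; 1 - (1-a)(1-b) \;\le\; 1,
\end{equation*}
since $1-a,\,1-b\in\{0,2\}$. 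The dual objective evaluates to $\innerp{M}{\vec y} = M_{12} + M_{23} - M_{13} = 2\varphi/c$ because $M_{13} = 0$ (the matrix $A$ vanishes off the subdiagonal). Weak duality then forces $\vec 1^T\vec\lambda^* \ge 2\varphi/c$, matching the constructive upper bound and establishing optimality.

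The main technical obstacle is the tight encoding-cost accounting: the bound $d_i\le 2s_i$ is sometimes loose, and justifying $d_1+d_2\le 2n$ cleanly requires tracking when the power-of-two ceiling of Lemma~\ref{thrm:tensor_GZZ} is saturated in both summands simultaneously. By contrast, the gate-time upper bound is an immediate consequence of the matching decomposition, and its optimality follows from the three-qubit dual certificate, which notably does not depend on $n$.
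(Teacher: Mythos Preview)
Your proposal is correct and follows essentially the same approach as the paper: the same odd/even matching decomposition $A=A_1+A_2$ handled via Lemma~\ref{thrm:tensor_GZZ}, and the identical three-qubit dual certificate $y_{12}=y_{23}=1$, $y_{13}=-1$ (the paper writes it as $\vec y=(1,-1,1)^T$ in the $(M_{12},M_{13},M_{23})$ ordering) padded with zeros for general $n$. The only cosmetic differences are that the paper reduces odd $n$ to the even case $n+1$ and then discards one coordinate, and verifies dual feasibility by listing the four rows of $V^T$ at $n=3$ rather than via your identity $a+b-ab=1-(1-a)(1-b)$.
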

\begin{proof}
We set $c = \varphi = 1$ w.l.o.g.
For now, we assume that $n$ is even.
Then
\begin{equation}
\begin{aligned}
 A &= \bigoplus_{i=1}^{n/2} E_{2} + E_{1} \bigoplus \left( \bigoplus_{i=1}^{n/2-1} E_{2} \right) \bigoplus E_{1} \\
 &= \begin{pmatrix}
		0&1&&&&&\\
		1&0&&&&&\\
		&&0&1&&&\\
		&&1&0&&&\\
		&&&&\ddots&&\\
	\end{pmatrix} + \begin{pmatrix}
		0&&&&&&\\
		&0&1&&&&\\
		&1&0&&&&\\
		&&&0&1&&\\
		&&&1&0&&\\
		&&&&&\ddots&\\
	\end{pmatrix} \\
&= \begin{pmatrix}
		0&1&&&&&\\
		1&0&1&&&&\\
		&1&0&1&&&\\
		&&1&0&1&&\\
		&&&&\ddots&&\\
	\end{pmatrix}
\end{aligned}
\end{equation}
Again, the diagonal entries do not contribute to the interactions.
The first term can be implemented, using \cref{thm:apdx_NNcoupling}, with the encoding cost $d_1 = 2^{\lceil \log_2 (n)-1 \rceil}$ and the $\Gate{GZZ}$ time of $1$.
The second term can be implemented, using \cref{thrm:tensor_GZZ}, as
\begin{equation}
 \bigotimes_{i=1}^{s} \Gate{GZZ}(\varphi E_{k_i}) \, ,
\end{equation}
where $s = n/2+1$,  $k_1 = k_{s} = 1$ and $k_i = 2$ for $i = 2, \dots , n/2$, with the encoding cost $d_2 = 2^{\lceil \log_2 (n+2)-1 \rceil}$ and the $\Gate{GZZ}$ time of $1$.
Adding the encoding costs $d = d_1 + d_2$ and $\Gate{GZZ}$ times of both terms yields the desired result.
If $n$ is not even, then repeat the previous steps for $n+1$ but in the end reduce the dimension of all the resulting $\vec m$ by discarding the last entry.

This construction corresponds to a feasible solution of the primal \ac{LP}~\eqref{eq:LP1} with the objective function value $2$.
To show optimality it suffices to construct a feasible solution for the dual \ac{LP}~\eqref{eq:dualLP} with the objective function value of $2$.
First, consider the case $n=3$ with the total coupling matrix
\begin{equation}
   A = \begin{pmatrix}
		0&1&0\\
		1&0&1\\
		0&1&0\\
	\end{pmatrix}  \Rightarrow \vec v(A)^T = (1, 0, 1) \, ,
\end{equation}
and the matrix representation of the linear operator
\begin{equation}\label{eq:V_n3}
   V^T = \begin{pmatrix}
		1&1&1\\
		1&-1&-1\\
		-1&1&-1\\
		-1&-1&1\\
	\end{pmatrix} \, ,
\end{equation}
as in \cref{eq:lin_op}.
A feasible dual solution satisfying $V^T \vec y \leq \vec 1$ is $\vec y = (1, -1, 1)^T$.
Thus, we can verify optimality for $n=3$ since the objective function value is $\innerp{A}{\vec y} = \vec v(A)^T \vec y = 2$.
Now, we consider arbitrary $n>3$.
Extending the dual solution for the case $n=3$ with zeros $\vec y = (1, -1, 1, 0, \dots , 0)^T$ does not change the objective function value $\innerp{A}{\vec y} = \vec v(A)^T \vec y = 2$.
Such an extended dual solution is still feasible since $V^T$ restricted to the first three columns only has rows which are already contained in \cref{eq:V_n3} due to the symmetry of \cref{eq:m_sym}.
\end{proof}
\cref{alg:apdx_subdiagonal} is the pseudocode implementation of \cref{cor:apdx_subdiagonal}.
It takes the number of qubits $n$, the constant value $c$ of the subdiagonal of $J$ and the factor $\varphi$ of $A$ as input and returns the sparse vector $\vec \lambda$, containing the time steps.
The encodings $\vec m$ are given by the indices of the non-zero elements $\lambda_{\vec m} \neq 0$.
\begin{algorithm}[H]
	\caption{Synthesize $\Gate{GZZ}(A)$ as in \cref{cor:apdx_subdiagonal}.}\label{alg:apdx_subdiagonal}
	\raggedright\textbf{Input:} $n, c, \varphi$
	\begin{algorithmic}[0]
		\State Initialize $\vec \lambda = \vec 0 \in \RR^{2^{n}}$
		\State{is\_odd $ \gets false$}
		\If{$n$ odd}
			\State{$n \gets n+1$}
			\State{is\_odd $ \gets true$}
		\EndIf
		\State Let $H_1^{d_1 \times d_1}$ be a Hadamard matrix \Comment{With $d_1 = 2^{\lceil \log_2 (n)-1 \rceil}$ as in \cref{thm:apdx_NNcoupling}}
		\State{$H_1^{d_1 \times \frac{n}{2}} \gets \frac{n}{2} \text{ columns of } H_1^{d_1 \times d_1}$} \Comment{It does not matter which columns}
		\State{$H_1^{d_1 \times n} \gets \text{duplicate columns } i=1, \dots , \frac{n}{2} \text{ of } H_1^{d_1 \times \frac{n}{2}}$}
		\State Let $H_2^{d_2 \times d_2}$ be a Hadamard matrix \Comment{With $d_2 = 2^{\lceil \log_2 (n+2)-1 \rceil}$}
		\State{$H_2^{d_2 \times (\frac{n}{2}+1)} \gets \frac{n}{2}+1 \text{ columns of } H_2^{d_2 \times d_2}$} \Comment{It does not matter which columns}
		\State{$H_2^{d_2 \times n} \gets \text{duplicate columns } i=2, \dots , \frac{n}{2} \text{ of } H_2^{d_2 \times (\frac{n}{2}+1)}$}
		\For{$j \in \{1, 2\}$}
		\If{is\_odd}
			\State Delete one column of $H_j^{d_j \times n}$  \Comment{It does not matter which column}
		\EndIf
		\For{$\vec m \in \operatorname{rows}(H_j^{d_j \times n})$}
			\State{$\lambda_{\vec m} \gets \frac{\varphi}{d_j c}$}
		\EndFor
		\EndFor
	\end{algorithmic}
	\raggedright\textbf{Output:} $\vec \lambda$ \Comment{Can efficiently be saved in a sparse data format}
\end{algorithm}

The following theorem does not require any additional assumptions on $J$.
It shows, how the \ac{LP}~\eqref{eq:LP1} can be supplemented with the explicit solution to exclude certain qubits.
\begin{theorem}[Excluding qubits]
\label{cor:apdx_excl_qubit}
Let $N$ be the total number of qubits on the quantum hardware and $n = N-s$ be the participating qubits in the $\Gate{GZZ}$ gate.
Synthesize the $\Gate{GZZ}(A)$ gate with $A \in \HTL (\RR^{n})$, using the \ac{LP}~\eqref{eq:LP1}.
Then, the total encoding cost (on $N$ qubits) is at most $\binom{n}{2} 2^{\lceil \log_2 (s+1) \rceil} $ and the total $\Gate{GZZ}$ time, $\vec1^T \vec \lambda^*$, (on $N$ qubits) is the same as for the \ac{LP} run on $n$ qubits.
\end{theorem}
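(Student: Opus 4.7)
The plan is to take an optimal $n$-qubit decomposition and lift it to $N$ qubits by adjoining a Hadamard-based extension on the $s$ excluded qubits that cancels all resulting cross- and excluded-excluded couplings, in the same spirit as \cref{thrm:tensor_GZZ}.

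Fix an optimal solution $\{(\vec m,\lambda_{\vec m})\}$ of \ac{LP}~\eqref{eq:LP1} for $A \in \HTL(\RR^n)$; by Proposition~II.3 of Ref.~\cite{basler_synthesis_2022} it has at most $\binom{n}{2}$ non-zero entries. Let $d \coloneqq 2^{\lceil \log_2(s+1)\rceil}$, pick a $d\times d$ Hadamard matrix, and select its all-ones column together with any $s$ further columns, forming a $d\times(s+1)$ submatrix. Denote the rows of the $s$ non-constant columns by $\vec u_k \in \{-1,+1\}^s$, $k=1,\dots,d$. Pairwise orthogonality of the chosen Hadamard columns immediately yields
\begin{equation*}
    \frac{1}{d}\sum_{k=1}^d \vec u_k = \vec 0, \qquad \frac{1}{d}\sum_{k=1}^d \vec u_k \vec u_k^T = \1_s .
\end{equation*}

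Next, for each $\vec m$ and each $k\in[d]$, form the extension $\vec m^{\prime(k)} \coloneqq (\vec m, \vec u_k) \in \{-1,+1\}^N$ and assign it weight $\lambda_{\vec m}/d$. Computing $\sum_{\vec m,k}(\lambda_{\vec m}/d)\,\vec m^{\prime(k)}(\vec m^{\prime(k)})^T$ block by block gives: the participating $n\times n$ block is $\sum_{\vec m}\lambda_{\vec m}\vec m\vec m^T = A$; the $n\times s$ cross-block vanishes because $\sum_k \vec u_k = \vec 0$; and the $s\times s$ excluded block equals $(\sum_{\vec m}\lambda_{\vec m})\,\1_s$, whose off-diagonal part vanishes and whose diagonal is irrelevant in $\HTL(\RR^N)$. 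Hence the lifted decomposition is a valid $N$-qubit solution of \ac{LP}~\eqref{eq:LP1} realizing $\Gate{GZZ}(A)$ on the $n$ participating qubits while leaving the remaining $s$ qubits untouched.

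Counting then gives a total encoding cost of at most $\binom{n}{2}\cdot d = \binom{n}{2}\,2^{\lceil \log_2(s+1)\rceil}$ and a total gate time of $\sum_{\vec m}\sum_{k=1}^d \lambda_{\vec m}/d = \vec 1^T\vec\lambda^*$, matching the $n$-qubit \ac{LP} optimum. The only nontrivial ingredient is the Hadamard-based choice of the extension vectors $\vec u_k$; once the two orthogonality identities above are in place, every remaining claim is a direct block computation, so I do not anticipate any serious obstacle.
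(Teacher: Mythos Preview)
Your proposal is correct and follows essentially the same approach as the paper: both lift the optimal $n$-qubit LP solution to $N$ qubits by appending $d=2^{\lceil\log_2(s+1)\rceil}$ Hadamard-derived tails to each encoding and then verify that cross-couplings and excluded-excluded couplings vanish. The only cosmetic difference is that the paper packages this via \cref{thrm:tensor_GZZ} and the Hadamard-product rule~\ref{item:had_prod_GZZ} (constructing $A_1=E_n\oplus 0$ and taking $A_1\circ A_2$), whereas you unpack the same computation directly as a block decomposition; unwinding rule~\ref{item:had_prod_GZZ} in the paper's argument reproduces exactly your identities $\tfrac{1}{d}\sum_k\vec u_k=\vec 0$ and $\tfrac{1}{d}\sum_k\vec u_k\vec u_k^T=\1_s$.
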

\begin{proof}
 Assume w.l.o.g. that all qubits to be excluded are at the end of the qubit array.
Let $k_1 = n$ and $k_2 , \dots , k_{s+1} = 1$.
Using \cref{thrm:tensor_GZZ} we obtain an
encoding cost of $d_1 = 2^{\lceil \log_2 (s+1)\rceil}$ and total $\Gate{GZZ}$ time $1$ to generate the matrix
\begin{equation}
 A_1 = \begin{pmatrix}
		E_n&0\\
		0&0\\
	\end{pmatrix} \, .
\end{equation}
Solving the \ac{LP}~\eqref{eq:LP1} for a matrix $A \in \HTL (\RR^{n})$ yields the encoding cost $d_2 = \binom{n}{2}$ and the total $\Gate{GZZ}$ time $\vec1^T \vec \lambda^*$.
We define the extension of $A \in \HTL (\RR^{n})$ by $A_2 \in \HTL (\RR^{N})$.
The extension can be done by appending $s$ arbitrary elements in $\{ -1 , +1 \}$ to all vectors $\vec m \in \{ -1 , +1 \}^n$ given by the \ac{LP}~\eqref{eq:LP1}.
Clearly,
\begin{equation}
 A_1 \circ A_2 = \begin{pmatrix}
		E_n&0\\
		0&0\\
	\end{pmatrix} \circ \begin{pmatrix}
		A&*\\
		*&*\\
	\end{pmatrix} = \begin{pmatrix}
		A&0\\
		0&0\\
	\end{pmatrix} \, .
\end{equation}
By \ref{item:had_prod_GZZ}, the total encoding cost is $d_1 d_2$ and the total $\Gate{GZZ}$ time is
\begin{equation}
\sum_{i = 1}^{d_1} \sum_{j = 1}^{d_2} 1/d_1 \lambda_j^* = \vec1^T \vec \lambda^* \, .
\end{equation}
Consider now an arbitrary coupling matrix $J$.
Then
\begin{equation}
 J \circ A_1 \circ A_2 =J \circ \begin{pmatrix}
		E_n&0\\
		0&0\\
	\end{pmatrix} \circ \begin{pmatrix}
		\tilde A&*\\
		*&*\\
	\end{pmatrix} = \begin{pmatrix}
		A&0\\
		0&0\\
	\end{pmatrix} \, ,
\end{equation}
where $\tilde A = A \oslash J$ is decomposed by the \ac{LP}~\eqref{eq:LP1}.
\end{proof}
\cref{alg:apdx_excl_qubit} takes the total number of qubits $N$, the total coupling matrix $A$ (on $n$ qubits) and the set $\mathcal Z \coloneqq \Set*{i \in [N] \given \text{exclude qubit } i }$ as input and returns the sparse vector $\vec \gamma$, containing the time steps. The encodings $\vec w$ are given by the indices of the non-zero elements $\gamma_{\vec w} \neq 0$.
\begin{algorithm}[H]
	\caption{Excluding qubits.}\label{alg:apdx_excl_qubit}
	\raggedright\textbf{Input:} $N, A, \mathcal Z$ (set of qubit indices to be excluded)
	\begin{algorithmic}[0]
		\State Initialize $\vec \gamma = \vec 0 \in \RR^{2^{N}}$
		\State{$s \gets |\mathcal Z|$}
		\State{$\overline{\mathcal Z} \gets \Set*{i \in [N] \given i \notin \mathcal Z}$}
		\State Let $H_1^{d_1 \times d_1}$ be a Hadamard matrix \Comment{With $d_1 = 2^{\lceil \log_2 (s+1) \rceil}$}
		\State{$H_1^{d_1 \times (s+1)} \gets s+1 \text{ columns of } H_1^{d_1 \times d_1}$} \Comment{It does not matter which columns}
		\State{$H_1^{d_1 \times N} \gets$ duplicate one column of $H_1^{d_1 \times (s+1)}$ $n-1$ times and place them at indices $i \in \overline{\mathcal Z}$}
		\State{$\vec \lambda^* \gets \text{Solve } A = \sum_{\vec m} \lambda_{\vec m} \vec m \vec m^T$} \Comment{using \ac{LP}~\eqref{eq:LP1}}

		\For{$\vec v \in \operatorname{rows}(H_1^{d_1 \times N})$}
		\For{$\vec m \in \Set*{\vec m \given \lambda_{\vec m}^* \neq 0}$} \Comment{There are at most $\binom{n}{2}$ such $\vec m$}
			\State{$\tilde{\vec m} \gets $ extend $\vec m$ with arbitrary elements from $\{-1, +1\}$ at indices $i \in \mathcal Z$.}
			\State{$\vec w \gets \vec v \circ \tilde{\vec m}$}
			\State{$\gamma_{\vec w} \gets \frac{\lambda_{\vec m}^*}{d_1}$}
		\EndFor
		\EndFor
	\end{algorithmic}
	\raggedright\textbf{Output:} $\vec \gamma$ \Comment{Can efficiently be saved in a sparse data format}
\end{algorithm}

We showed explicit constructions of time-optimal $\Gate{GZZ}$ gates for total coupling matrices $A \in \HTL (\RR^n)$ with diagonal block structure and next neighbor couplings.
The resulting $\Gate{GZZ}$ gates have a constant gate time and require only linear many encodings to be implemented.

\section{Efficient heuristic for fast GZZ gates}
\label{sec:LP_approx}
In this section, we build on the results of \cref{thrm:tensor_GZZ} to derive a heuristic algorithm for synthesizing $\Gate{GZZ}(A)$ gates with low total gate time for any $A \in \HTL (\RR^n)$.
This algorithm runs in polynomial time as opposed to the general \ac{LP}~\eqref{eq:LP1}, which we have shown in \cref{thm:hardness} to be \NP-hard.
The runtime saving is due to the restriction of the elliptope $\mathcal E_n$ in \cref{def:outprod}, with exponential many elements, to a set with polynomial many elements.
This restriction yields a polynomial sized \ac{LP} which can be solved in polynomial time.
In practice, the simplex algorithm has a runtime that scales polynomial in the problem size \cite{spielman_smoothed_2004}.

Recall the modified Hadamard matrix $H^{d \times n}_{\vec k}$ defined in the proof of \cref{thrm:tensor_GZZ}, where we used the rows of $H^{d \times n}_{\vec k}$ as encodings $\vec m$ to generate block diagonal target coupling matrices under some assumptions.
Here, $s$ is the number of block matrices on the diagonal of the target matrix, $\vec k \in \NN^s$ contains the dimensions for each block and $d = 2^{\lceil \log_2 (s) \rceil}$ is the required number of encodings to construct such a block diagonal matrix.
From now on, we only consider $\vec k = (j, 1 \dots 1) \in \NN^{s}$ such that
\begin{equation}
\label{eq:heuristic_outer_prod}
 (H^{d \times n}_{\vec k})^T H^{d \times n}_{\vec k}  - d \1_{n} = d \left(E_{j} \oplus E_{1} \oplus \cdots \oplus E_{1}\right) \, ,
\end{equation}
see \cref{eq:diag_block_struct}. 
The requirement that $\sum_i k_i = n$ implies that such a vector $\vec k$ has $s = n-j+1$ entries.
Permuting the columns of $H^{d \times n}_{\vec k}$ results in the same permutation of the rows and columns of the right-hand side of \cref{eq:heuristic_outer_prod}.
We denote the set of all column-permuted $H^{d \times n}_{\vec k}$ by $\mathcal C^{(j)}$.
A specific element of $\mathcal C^{(j)}$ is denoted by $H^{d \times n}_{\vec r}$, where $\vec r \in \NN^j$ is an ordered multi-index $r_1 < \dots < r_j$ indicating which columns of $H^{d \times n}_{\vec r}$ are identical. 
For example, $\vec r = (2,5,6)$ indicates that the columns of $H^{d \times n}_{\vec r}$ with indices $2$, $5$ and $6$ are identical, i.e.\ replacing column $5$ and $6$ with column $2$.
This notation will be useful later.

\begin{definition}
For any $j \in \{2,3,\dots,n\}$, we define the restricted elliptope
\begin{equation}\label{def:restricted_outprod}
	\mathcal E_n^{(j)} \coloneqq \Set*{ \vec m \vec m^T \given \vec m \text{ is a row of } H^{d \times n}_{\vec r} \in \mathcal C^{(j)} } .
\end{equation}
Further we define
\begin{equation}
 \mathcal E_n^{[j]} \coloneqq \bigcup_{i=2}^{j} \mathcal E_n^{(i)} \, .
\end{equation}
\end{definition}
We choose the definition in \cref{def:restricted_outprod} similar as in \cref{def:outprod}.
Next, we show the size scaling of the restricted elliptopes.
This directly translates to the size and runtime of the heuristic synthesis optimization.

\begin{proposition}
\label{prop:poly_size_ellip}
 For any $j \in \{2,3,\dots,n\}$, the number of different encodings $\vec m$ generating the restricted elliptope $\mathcal E_n^{[j]}$ scales as $\LandauO (n^{j+1})$.
\end{proposition}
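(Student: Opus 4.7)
The plan is a direct counting argument, upper-bounding the cardinality of $\mathcal E_n^{(j)}$ and then summing over $i=2,\dots,j$. Both ingredients are routine once we unpack the definitions.

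First I would count the number of column-permuted Hadamard-like matrices in $\mathcal C^{(j)}$. An element $H^{d\times n}_{\vec r} \in \mathcal C^{(j)}$ is uniquely specified by the ordered multi-index $\vec r \in \NN^j$ with $r_1 < \dots < r_j$ indicating which $j$ columns of $H^{d\times n}_{\vec k}$ (with $\vec k = (j,1,\dots,1)$) are duplicated. Hence $|\mathcal C^{(j)}| \leq \binom{n}{j} = \LandauO(n^j)$. Second, I would bound the number of rows of any such matrix: since $\vec k$ has $s = n-j+1$ entries, the construction in the proof of \cref{thrm:tensor_GZZ} gives
\begin{equation}
 d = 2^{\lceil \log_2(n-j+1) \rceil} \leq 2(n-j+1) \leq 2n \, .
\end{equation}

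Combining these two bounds, each matrix in $\mathcal C^{(j)}$ contributes at most $d$ encodings $\vec m$, so
\begin{equation}
 \abs{\mathcal E_n^{(j)}} \leq d \cdot |\mathcal C^{(j)}| \leq 2n \cdot \binom{n}{j} = \LandauO(n^{j+1}) \, .
\end{equation}
Note that this is indeed only an upper bound, since different choices of $\vec r$ (or different rows across different matrices) could yield coincident encodings $\vec m$; but for an $\LandauO$-bound this over-counting is irrelevant. Finally, for the union
\begin{equation}
 \abs{\mathcal E_n^{[j]}} \leq \sum_{i=2}^{j} \abs{\mathcal E_n^{(i)}} \leq \sum_{i=2}^{j} \LandauO(n^{i+1}) = \LandauO(n^{j+1}) \, ,
\end{equation}
where the last equality uses that the $i=j$ term dominates for $n \geq 2$ and the number of summands is independent of $n$.

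There is no real obstacle here; the only subtlety worth flagging is that we are counting encodings $\vec m$ (equivalently rank-one outer products $\vec m \vec m^T$), not matrices $H^{d\times n}_{\vec r}$, which is why the factor $d$ appears. Since the goal is a polynomial size bound on the restricted \ac{LP} of \cref{sec:LP_approx}, an upper bound on the number of columns of its constraint matrix is exactly what is needed.
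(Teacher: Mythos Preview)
Your proposal is correct and follows essentially the same approach as the paper: both count $|\mathcal C^{(j)}| = \binom{n}{j}$, bound the number of rows by $d < 2(n-j+1)$, multiply to get $|\mathcal E_n^{(j)}| = \LandauO(n^{j+1})$, and then sum over $i=2,\dots,j$ with the top term dominating. The only cosmetic difference is that the paper tracks the explicit constant via $\binom{n}{j} \leq n^j/j!$ whereas you absorb everything into $\LandauO$-notation.
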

\begin{proof}
Note, that $|\mathcal C^{(j)}| = \binom{n}{j}$ since there are $j$ duplicate columns in $H^{d \times n}_{\vec r}$.
The binomial coefficient can be bounded by $\binom{n}{j} \leq n^j/j!$.
Since there are $d = 2^{\lceil \log_2 (n-j+1) \rceil} < 2(n-j+1)$ rows of $H^{d \times n}_{\vec r}$ we have a rough upper bound of the number different encodings generating the restricted elliptope, $\abs[\big]{\mathcal E_n^{(j)}} \leq d \binom{n}{j} < 2(n-j+1) \binom{n}{j} < 2 n^{j+1}/j!$.
The first inequality is due to possible redundant encodings in the definition of $\mathcal E_n^{(j)}$.
Similarly, we can upper bound $\abs[\big]{\mathcal E_n^{[j]}} \leq \sum_{i=2}^j \abs[\big]{\mathcal E_n^{(i)}} < 2 \sum_{i=2}^j n^{i+1}/i!$ which is a polynomial of order $j+1$.
\end{proof}
We denote the convex cone generated by a set $V$ by
\begin{equation}
 \operatorname{cone} (V) \coloneqq \Set*{ \sum_{i} \lambda_i v_i \given \lambda_i \geq 0 \text{, } v_i \in V } \, .
\end{equation}

With that, we are ready to present the main result of this section.
\begin{theorem}
\label{thrm:spanning_set}
$\operatorname{cone} (\mathcal E_n^{(2)}) = \HTL (\RR^{n})$.
\end{theorem}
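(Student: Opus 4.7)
The inclusion $\operatorname{cone}(\mathcal E_n^{(2)}) \subseteq \HTL(\RR^n)$ is immediate from the definitions, so the content lies in the reverse inclusion. My plan is to show that both $+E^{(a,b)}$ and $-E^{(a,b)}$ lie in $\operatorname{cone}(\mathcal E_n^{(2)})$ for every pair $1 \leq a < b \leq n$, where $E^{(a,b)}$ denotes the symmetric matrix with $1$'s at the off-diagonal slots $(a,b),(b,a)$ and zeros elsewhere. Since $\{E^{(a,b)}\}_{a<b}$ is a basis of $\HTL(\RR^n)$, any $M \in \HTL(\RR^n)$ then splits entry-wise as $M = \sum_{a<b} M_{ab}^+ E^{(a,b)} + \sum_{a<b} M_{ab}^- (-E^{(a,b)})$ with non-negative coefficients $M_{ab}^\pm := \max(\pm M_{ab}, 0)$, placing $M \in \operatorname{cone}(\mathcal E_n^{(2)})$.

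For $+E^{(a,b)}$ I would invoke \cref{eq:heuristic_outer_prod} with $\vec k = (2, 1, \ldots, 1)$ and $\vec r = (a, b)$. After the column permutation bringing the $E_2$-block to positions $a, b$, the identity reads
\begin{equation*}
	\sum_{\vec m \in \operatorname{rows}(H^{d \times n}_{(a,b)})} \vec m \vec m^T \;=\; d\, E^{(a,b)} \quad \text{in } \HTL(\RR^n),
\end{equation*}
exhibiting $E^{(a,b)}$ as the non-negative conic combination $\tfrac{1}{d} \sum_{\vec m} \vec m \vec m^T$ of elements of $\mathcal E_n^{(2)}$.

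For $-E^{(a,b)}$ my plan is to invoke a Fourier identity on the Boolean cube: by orthogonality of the characters $\vec m \mapsto m_i m_j$ on $\{-1,+1\}^n$ one verifies entry-wise that
\begin{equation*}
	-E^{(a,b)} \;=\; \frac{1}{2^{n-1}} \sum_{\vec m \in \{-1,+1\}^n :\, m_a m_b = -1} \vec m \vec m^T \quad \text{in } \HTL(\RR^n),
\end{equation*}
which expresses $-E^{(a,b)}$ as a non-negative combination of rank-one matrices indexed by those $\vec m$ with $m_a \neq m_b$. It then remains to show that each such $\vec m \vec m^T$ belongs to $\operatorname{cone}(\mathcal E_n^{(2)})$. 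For $n \geq 3$ the pigeonhole principle guarantees that any $\vec m$ with $m_a \neq m_b$ has two entries of equal sign at some pair $(r_1, r_2) \neq (a, b)$, and by exploiting the freedom in the underlying Hadamard matrix---in particular that column sign flips preserve the Hadamard property---one can realize $\vec m$ as a row of some $H^{d \times n}_{\vec r} \in \mathcal C^{(2)}$, so that $\vec m \vec m^T \in \mathcal E_n^{(2)}$.

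The main technical obstacle I anticipate is the combinatorial bookkeeping of this last step: for a fixed Hadamard matrix the set of rows realizable by elements of $\mathcal C^{(2)}$ is constrained by column orthogonality, so not every $\vec m$ with the pigeonhole property is literally a row. I expect to resolve this either by directly exploiting the flexibility in the Hadamard construction, or, failing that, by expressing any missing $\vec m \vec m^T$ as a further non-negative combination of rank-one matrices already in $\mathcal E_n^{(2)}$ via the linear dependencies among the characters $m_i m_j$ on the Boolean cube; the structure of these dependencies is exactly what \cref{eq:heuristic_outer_prod} encodes, so the same tool used for $+E^{(a,b)}$ should suffice.
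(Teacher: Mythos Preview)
Your treatment of $+E^{(a,b)}$ is correct and coincides with the paper's. The gap is in the $-E^{(a,b)}$ step, and it is a real one.

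Your Fourier identity is fine, but it leaves you needing $\vec m \vec m^T \in \operatorname{cone}(\mathcal E_n^{(2)})$ for \emph{every} $\vec m$ with $m_a m_b = -1$. This is strictly more than what is required, and your pigeonhole/sign-flip argument does not deliver it. Concretely, take $n=5$: then $s=4$, $d=4$, and $H^{4\times 5}_{\vec k}$ with $\vec k=(2,1,1,1)$ duplicates the all-ones column of the Sylvester matrix, so its rows have $0$ or $2$ entries equal to $-1$. Column permutations preserve this count, so every row of every matrix in $\mathcal C^{(2)}$ has an even number of $-1$'s. Hence $\vec m=(-1,1,1,1,1)$, which satisfies $m_1m_2=-1$ and appears in your sum for $-E^{(1,2)}$, is \emph{not} a row of any element of $\mathcal C^{(2)}$. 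Your first escape route fails because $\mathcal C^{(2)}$ is defined via column \emph{permutations} of a fixed $H^{d\times n}_{\vec k}$, not column sign flips; there is no freedom in the Hadamard matrix to exploit. Your second escape route is circular: the off-diagonal part of $(-1,1,1,1,1)(-1,1,1,1,1)^T$ has negative entries in the first row and column, so writing it as a non-negative combination in $\HTL$ already presupposes that the $-E^{(1,j)}$ lie in the cone.

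The paper avoids this by never invoking the full Boolean cube. Instead it defines $H^d_{(r_1,-r_2)}$ by duplicating column $r_1$ at position $r_2$ with a sign flip, so that $(H^d_{(r_1,-r_2)})^T H^d_{(r_1,-r_2)} - d\,\1 = -d\,e_{(r_1,r_2)}$; this expresses $-E^{(r_1,r_2)}$ using only the $d$ rows of this single matrix. It then proves, by induction on the Sylvester doubling $H^{2d}=\bigl(\begin{smallmatrix}H^d & H^d\\ H^d & -H^d\end{smallmatrix}\bigr)$ and a short case analysis on whether $r_1,r_2$ lie in the same half, that every row of $H^d_{(r_1,-r_2)}$ already occurs as a row of some $H^d_{(\tilde r_1,\tilde r_2)}$ without sign flip. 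That is the missing combinatorial ingredient; it bypasses exactly the obstacle you flagged by restricting attention to $d$ carefully chosen encodings rather than $2^{n-1}$ arbitrary ones.
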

\begin{proof}
W.l.o.g.\ we can assume $d=n$ and denote $H^{d \times d}_{\vec r} \in \mathcal C^{(2)}$ by $H^{d}_{(r_1, r_2)}$ with the property
\begin{equation}
\label{eq:std_basis_decomp}
 (H^{d}_{(r_1, r_2)})^T H^{d}_{(r_1, r_2)} - d \1_{d} = d\, e_{(r_1, r_2)} \, ,
\end{equation}
where $e_{(r_1, r_2)}$ is an element of the standard basis for symmetric matrices with vanishing diagonal.
By \cref{eq:std_basis_decomp} it holds $\HTL (\RR_{\geq 0}^{n}) \subseteq \operatorname{cone} (\mathcal E_n^{(2)})$, i.e., symmetric matrices with non-negative entries are in the convex cone.

It is left to show that $\HTL (\RR_{< 0}^{n}) \subseteq \operatorname{cone} (\mathcal E_n^{(2)})$, i.e., that also symmetric matrices with negative entries are in the convex cone.
To show this inclusion we define $H^{d}_{(r_1, -r_2)}$ similar as $H^{d}_{(r_1, r_2)}$ except the duplicate column at $r_2$ is multiplied by $-1$ such that
\begin{equation}
\label{eq:minus_std_basis_decomp}
 (H^{d}_{(r_1, -r_2)})^T H^{d}_{(r_1, -r_2)} - d \1_{d} = -d\, e_{(r_1, r_2)} \, .
\end{equation}
We have to show that for each row $\vec m \in \operatorname{rows} (H^{d}_{(r_1, -r_2)})$ there exist $\tilde r_1$ and $\tilde r_2$ such that $\vec m \in \operatorname{rows} (H^{d}_{(\tilde r_1, \tilde r_2)})$.
This can be verified straightforwardly for $d = 4$ by checking all rows.
W.l.o.g.\ we show that the hypothesis holds for any $H^{2d}_{(r_1, -r_2)}$ by assuming it holds for $H^{d}_{(r_1, -r_2)}$.
The Sylvester-Hadamard matrix is constructed inductively according to
\begin{equation}
 H^{2d} = \begin{pmatrix}
		H^d & H^d\\
		H^d & -H^d\\
	\end{pmatrix} .
\end{equation}
We consider three cases for $H^{2d}_{(r_1, -r_2)}$.
\paragraph*{Case 1.}
For a $H^{2d}_{(r_1, -r_2)}$ with identical columns, up to minus sign, at indices $r_1, r_2 \in [d]$ or $r_1, r_2 \in [d+1, 2d]$ the hypothesis holds by our assumption by choosing $\tilde r_1, \tilde r_2 \in [d]$ or $\tilde r_1, \tilde r_2 \in [d+1, 2d]$ respectively.
\paragraph*{Case 2.}
Considering the first $d$ rows of $H^{2d}_{(r_1, -r_2)}$ with identical columns, up to minus sign, at indices $r_1 \in [d]$ and $r_2 \in [d+1, 2d]$.
This case is equivalent to \textit{Case 1.} with $r_1, r_2 \in [d+1, 2d]$ since only the column at $r_2$ is negated.
\paragraph*{Case 3.}
Considering the last $d$ rows of $H^{2d}_{(r_1, -r_2)}$ with identical columns, up to minus sign, at indices $r_1 \in [d]$ and $r_2 \in [d+1, 2d]$.
These rows are included in the last $d$ rows of $H^{2d}_{(\tilde r_1, \tilde r_2)}$ with $\tilde  r_1, \tilde  r_2 \in [d+1, 2d]$ and $\tilde  r_2 = r_2$ since the column $r_2$ is negated which is equivalent to just duplicating a column of $-H^d$.

We have shown that for each row $\vec m \in \operatorname{rows} (H^{d}_{(r_1, -r_2)})$ there exist $\tilde r_1$ and $\tilde r_2$ such that $\vec m \in \operatorname{rows} (H^{d}_{(\tilde r_1, \tilde r_2)})$.
Therefore, $\HTL (\RR_{< 0}^{n}) \cup \HTL (\RR_{\geq 0}^{n}) = \HTL (\RR^{n}) = \operatorname{cone} (\mathcal E_n^{(2)})$.
The last equality follows from the definition of $\operatorname{cone}$ and $\mathcal E_n^{(2)}$.
\end{proof}
\Cref{thrm:spanning_set} shows that the constraint of \ac{LP}~\eqref{eq:LP1} can always be fulfilled only considering $\vec m \vec m^T \in \mathcal E_n^{(2)}$.
Similar to \cref{eq:lin_op} we define the restricted linear operator $\mathcal V^{[j]}: \RR_{\geq 0}^{s} \rightarrow \HTL(\RR^n) : \vec \lambda \mapsto \sum_{\vec m} \lambda_{\vec m} \vec m \vec m^T$ for all $\vec m \vec m^T \in \mathcal E_n^{[j]}$ with $h = |\mathcal E_n^{[j]}|$, represented by a matrix $V^{[j]}  \in \{-1,+1 \}^{(\binom{n}{2}) \times h}$.
We define the \emph{restricted \ac{LP}$^j$} to be
\begin{mini}
{}{\vec 1^T \vec \lambda}{}{}
\label{eq:LP_poly_hierarchy}
\addConstraint{V^{[j]} \vec \lambda} {=\vec v(M)}{}{}
\addConstraint{ \vec \lambda}{ \in \R^{h}_{\geq 0} \, .}{}{}
\end{mini}
\cref{alg:eff_heu} summarizes the steps to construct $\mathcal E_n^{[j]}$ and therefore the matrix representation of the restricted linear operator $V^{[j]}$.
In practice, \cref{alg:eff_heu} has to be executed only once per number of qubits $n$ since the constraints of \ac{LP}~\eqref{eq:LP_poly_hierarchy} can be fulfilled for all $M \in \HTL (\RR^{n})$.
This is due to the fact that $\mathcal E_n^{(2)} \subseteq \mathcal E_n^{[j]}$ for any $2 \leq j \leq n$ and $\operatorname{cone} (\mathcal E_n^{(2)}) = \HTL (\RR^{n})$ (\cref{thrm:spanning_set}).
The time and space complexity of \cref{alg:eff_heu} scales polynomially in $n$ as shown in \cref{prop:poly_size_ellip}.
Therefore, the restricted \ac{LP}$^j$ is also of polynomial size for a fixed $j$.
Increasing $j$ leads to better approximations due to the enlarged search space for the optimal solution.
Note, that the runtime of the \ac{MIP} defined in \cite[Section 2.2.2]{basler_synthesis_2022} also benefits from using $\mathcal E_n^{[j]}$.

As mentioned in \cref{sec:ana_GZZ} the dimension of the Hadamard matrices in \cref{eq:heuristic_outer_prod} can be reduced to $d = 4 \lceil (n-1)/4 \rceil \leq n-5$ if we assume that the Hadamard conjecture holds.
Therefore, the runtime of the restricted \ac{LP}$^j$ is reduced as well if such Hadamard matrices are used.
In \cref{sec:numerics} we numerically benchmark the heuristic algorithm with and without the reduced runtime of the restricted \ac{LP}$^j$.

\begin{algorithm}[H]
	\caption{Constructing $\mathcal E_n^{[j]}$.}\label{alg:eff_heu}
	\raggedright\textbf{Input:} $n, j$
	\begin{algorithmic}[0]
		\State Initialize $\mathcal E_n^{[j]} = \emptyset$
		\For{$i = 2, \dots, j$}
		\State Initialize $\mathcal E_n^{(i)} = \emptyset$
		\State Let $H^{d \times d}$ be a Hadamard matrix \Comment{With $d = 2^{\lceil \log_2 (n-i+1) \rceil}$}
		\State{$H^{d \times (n-i+1)} \gets n-i+1 \text{ columns of } H^{d \times d}$} \Comment{It does not matter which columns}
		\For{all $\vec r$ such that $1 \leq r_1 < \dots < r_i \leq n$} \Comment{There are $\binom{n}{i}$ such $\vec r$}
			\State $H^{d \times n}_{\vec r} \gets$ duplicate one column of $H^{d \times (n-i+1)}$ $i-1$ times to indices $r_1, \dots , r_i$
			\State $\mathcal E_n^{(i)} \gets \mathcal E_n^{(i)} \cup \Set*{ \vec m \vec m^T \given \vec m \text{ is a row of } H^{d \times n}_{\vec r} }$
		\EndFor
		\State $\mathcal E_n^{[j]} \gets \mathcal E_n^{[j]} \cup \mathcal E_n^{(i)} $
		\EndFor
	\end{algorithmic}
	\raggedright\textbf{Output:} $\mathcal E_n^{[j]}$
\end{algorithm}

\section{Bounds on the total GZZ time}
\label{sec:lin_scaling}
Our main analytic result (\cref{thm:gate-time}) is that the optimal total $\Gate{GZZ}$ time $\vec 1^T \vec \lambda^*$ is lower and upper bounded by the norms $\linfnorm{ M }$ and $\lonenorm{M}$, respectively.
Note, that for a dense matrix $M$, its norm $\lonenorm{M}$ scales quadratic with the number of qubits $n$.
We conjecture an improved upper bound on the total $\Gate{GZZ}$ time for dense $M$ that scales at most linear with the number of qubits $n$.
We support this conjecture with explicit solutions for the \ac{LP} \eqref{eq:LP1} reaching this bound for any $n$ and numerical results validating the conjecture for $n \leq 8$.

\begin{theorem}\label{thm:gate-time}
 The optimal total gate time of $\Gate{GZZ}(A)$ with $A \in \HTL (\RR^n)$ is lower and upper bounded by
 \begin{equation}
  \linfnorm{ M } \leq \vec 1^T \vec \lambda^* \leq \lonenorm{M} \, ,
 \end{equation}
 where $M \coloneqq A \oslash J$.
 Equality holds for the lower bound for all matrices $M = C \vec m \vec m^T$ for any $\vec m \in \{-1 , +1 \}^n$ and $C\geq0$.
\end{theorem}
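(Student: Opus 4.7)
The lower bound is precisely the statement of \cref{lem:triv_low_bound}, so no additional work is required there. For the upper bound, my plan is to exhibit an explicit feasible solution $\vec\lambda$ for the \ac{LP}~\eqref{eq:LP1} whose objective equals $\lonenorm{M}$; by optimality this immediately gives $\vec 1^T \vec\lambda^* \leq \lonenorm{M}$.

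The key building block is already present in the proof of \cref{thrm:spanning_set}: for any off-diagonal index pair $(r_1,r_2)$ with $r_1<r_2$, the Hadamard-style constructions $H^{d}_{(r_1,\pm r_2)}$ satisfy
\begin{equation}
(H^{d}_{(r_1,\pm r_2)})^T H^{d}_{(r_1,\pm r_2)} - d\,\1_d = \pm d\, e_{(r_1,r_2)} ,
\end{equation}
where $e_{(r_1,r_2)}\in \HTL(\RR^n)$ is the standard basis element with a single unit off-diagonal entry at position $(r_1,r_2)$. Translated into a $\vec\lambda$-assignment, distributing weight $1/d$ onto each of the $d$ rows of the appropriate Hadamard matrix realizes $\pm e_{(r_1,r_2)}$ with total time exactly $1$. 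To prove the upper bound, I would then, for each off-diagonal pair $(i,j)$, pick the sign construction matching $\sign(M_{ij})$, assign weight $|M_{ij}|/d$ to each of its rows, and superpose the resulting $\vec\lambda$'s across all $\binom{n}{2}$ pairs. By linearity of $\mathcal V$, the result is a feasible $\vec\lambda$ with $V\vec\lambda = \vec v(M)$ and objective $\sum_{i<j}|M_{ij}|=\lonenorm{M}$. The only bookkeeping subtlety is that several per-pair constructions may contribute weight to the same encoding $\vec m$; since all weights are non-negative, these contributions simply add without cancellation.

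For the equality statement in the lower bound, if $M = C\vec m\vec m^T$ (restricted to the off-diagonal) for some $\vec m\in\{-1,+1\}^n$ and $C\geq 0$, then the single-encoding assignment $\lambda_{\vec m}=C$ (with all other time steps zero) is feasible and has objective $C = \linfnorm{M}$, since every off-diagonal entry of $M$ has magnitude $C$. Combined with \cref{lem:triv_low_bound}, this saturates the lower bound and forces $\vec 1^T \vec\lambda^* = \linfnorm{M}$. A minor point to mind is the $\vec m\leftrightarrow-\vec m$ identification used to collapse the $2^n$ encodings to $2^{n-1}$, which is easily handled by fixing a canonical representative (e.g.\ $m_n=+1$). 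I do not expect a genuine obstacle in this proof: both directions reduce to a direct combination of \cref{lem:triv_low_bound} with the explicit per-pair Hadamard decomposition already developed inside \cref{thrm:spanning_set}.
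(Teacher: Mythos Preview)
Your proposal is correct and follows essentially the same route as the paper: the lower bound is \cref{lem:triv_low_bound}, the upper bound is obtained by assembling the per-pair Hadamard constructions $H^d_{(r_1,\pm r_2)}$ from the proof of \cref{thrm:spanning_set} with weights $|M_{ij}|/d$ and summing over all $(i,j)$, and the equality case is handled by the single-encoding assignment $\lambda_{\vec m}=C$. The minor bookkeeping points you flag (overlapping encodings, the $\vec m\leftrightarrow -\vec m$ identification) are indeed harmless.
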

\begin{proof}
The lower bound has been shown in \cref{lem:triv_low_bound}.
Equality in the lower bound holds for $M = C \vec m \vec m^T$ by setting $\lambda_{\vec m} = C = \linfnorm{ M }$ and $\lambda_{\vec m^\prime} = 0$ for all $\vec m^\prime \neq \vec m$.
We use the explicit construction of the standard basis elements for symmetric matrices from the proof of \cref{thrm:spanning_set} to show the upper bound.
To be precise, we have
\begin{equation}
 \frac{|M_{ij}|}{d} (H^{d \times n}_{\vec r})^T H^{d \times n}_{\vec r} - d \1_{n} = M_{ij} e_{(i, j)} \, ,
\end{equation}
where $\vec r = (i,j)$ if $M_{ij} \geq 0$ or $\vec r = (\tilde i,  \tilde j)$ if $M_{ij} < 0$ as in \cref{eq:std_basis_decomp,eq:minus_std_basis_decomp}, respectively. We define $\vec \lambda^{(i,j)}$ with the entries $\lambda^{(i,j)}_{\vec m} \coloneqq \frac{|M_{ij}|}{d} \, [\vec m \in \operatorname{rows}(H^{d}_{\vec r})]$. According to \cref{thrm:tensor_GZZ} we have $\vec 1^T \vec \lambda^{(i,j)} = |M_{ij}|$.
Adding $\vec \lambda^{(i,j)}$ for all $i<j$ yields the upper bound $\lonenorm{M}$.
\end{proof}
These bounds get tighter the sparser $M$ is.
If $M$ has only one non-zero value, then clearly $\linfnorm{ M } = \vec 1^T \vec \lambda^* = \lonenorm{ M}$.
Furthermore, these bounds also hold for the heuristic, which we presented in \cref{sec:LP_approx}.

Next, we state our conjecture that the optimal gate time of $\Gate{GZZ}(A)$ scales at most linear with the number of qubits.
\begin{conjecture}\label{conj:gate-time}
 The optimal gate time of $\Gate{GZZ}(A)$ with $A \in \HTL (\RR^n)$ is tightly upper bounded by
 \begin{equation}
  \vec 1^T \vec \lambda^* \leq \linfnorm{ M } \cdot \begin{cases}
   n \, , &\text{for odd } n\, ,\\
   n-1 \, , &\text{for even } n\, .
  \end{cases}
 \end{equation}
\end{conjecture}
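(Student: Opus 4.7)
My plan is to attack this conjecture through LP duality. The first step is to invoke strong duality, as already exploited in the proof of \cref{thm:gate-time}: feasibility of \ac{LP}~\eqref{eq:LP1} together with duality gives
\begin{equation*}
 \vec 1^T \vec\lambda^* \,=\, \max_{\vec y \,:\, V^T \vec y \leq \vec 1} \innerp{M}{\vec y} \,\leq\, \linfnorm{M} \cdot \max_{\vec y \,:\, V^T \vec y \leq \vec 1} \lonenorm{\vec y}.
\end{equation*}
The conjecture therefore reduces to the purely combinatorial statement that the $\ell_1$-radius of the dual feasibility polytope $\{\vec y : V^T \vec y \leq \vec 1\}$ is bounded by $c(n) = n$ for odd $n$ and $c(n) = n-1$ for even $n$. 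This reformulation no longer involves $M$ and concerns only the geometry determined by the $\pm 1$-columns of $V$.

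The second step is to produce a candidate extremal dual solution and verify its value. The family of explicit worst-case instances from \cref{prop:worst_case_sol} saturates the conjectured bound, so by complementary slackness any associated dual optimum $\vec y^\star$ must be supported exactly on those facets of the dual polytope activated by these primal decompositions. I would extract $\vec y^\star$ from this pair and verify directly that $\sum_{i<j} m_i m_j y^\star_{ij} \leq 1$ for every $\vec m \in \{-1,+1\}^n$. I expect the parity dependence of $c(n)$ to surface here through the familiar even/odd dichotomy for sums $\sum_i m_i$ on the Boolean cube.

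The third step is to rule out any dual-feasible $\vec y$ of larger $\ell_1$-norm, which I would attempt by induction on $n$. Averaging the dual constraints for $\vec m$ with the one for its last-bit flip shows that the restriction of any feasible $\vec y$ to indices $i,j<n$ is itself $(n-1)$-qubit dual feasible, so the inductive hypothesis bounds its $\ell_1$-norm by $c(n-1)$. The main obstacle lies in bounding the remaining last-column contribution $\sum_{i<n}|y_{in}|$ by only $c(n)-c(n-1)\in\{0,2\}$: a decoupled argument fails (for even $n$ it would force $y_{in}\equiv 0$ uniformly), so one must establish a genuine trade-off between the block norm and the last-column norm, perhaps via induction in steps of two (noting that $c(n)-c(n-2)=2$ for all $n$) combined with additional structural constraints from the non-averaged dual inequalities. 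Since the dual polytope is the polar of an image of the cut polytope of \cref{def:cut_polytope}, whose facial structure is notoriously intricate and whose membership problem is \NP-hard by \cref{thm:hardness}, no closed-form facet enumeration is available; a weaker Grothendieck-type inequality would likely yield only $\vec 1^T \vec\lambda^* \leq Cn\linfnorm{M}$ with some absolute constant $C$, and sharpening $C$ to $1$ while recovering the parity-dependent $c(n)$ is precisely the difficulty acknowledged in \cref{app:proof_problems}.
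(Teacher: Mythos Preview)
Your plan mirrors the paper's own (unsuccessful) strategy almost exactly. The statement is a \emph{conjecture}; the paper does not prove it but supports it with the explicit construction of \cref{prop:worst_case_sol} and numerical checks for $n\leq 8$, and then catalogues failed proof attempts in \cref{app:proof_problems}. Your first step---reducing via strong duality to the $M$-independent problem $\max_{V^T\vec y\leq\vec 1}\lonenorm{\vec y}$---is precisely the paper's \ac{LP}~\eqref{eq:upper_bound_LP}. Your second step recovers the candidate extremal dual vector $\vec y^\star=-\vec 1/\lfloor n/2\rfloor$ of \cref{lem:lin_scal_ana_dual}, whose feasibility the paper already verifies. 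Your third step, induction on $n$, is explicitly listed in \cref{app:proof_problems} as having been tried: the paper reports it ``turned out to be intricate due to the additional degrees of freedom in each induction step,'' which is exactly the obstacle you isolate when you note that the last-column contribution cannot be bounded by $c(n)-c(n-1)\in\{0,2\}$ via a decoupled argument. Your remark that a Grothendieck-type inequality would lose the sharp constant and parity dependence is apt; the paper's other attempts (dualization of \ac{LP}~\eqref{eq:upper_bound_LP}, invexity/\ac{KKT} sufficiency, cut-polytope reformulation) hit analogous walls. In short, your outline is sound and coincides with the paper's framing, but it is incomplete at the same critical point the paper identifies as open.
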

Hence, it provides a tighter bound for dense $M$ than \cref{thm:gate-time}.
To support the claim of \cref{conj:gate-time} we first construct explicit dual and primal feasible solutions for the case $M = -E_n$ for the \ac{LP}'s \eqref{eq:LP1} and \eqref{eq:dualLP}, respectively.
 Then optimality is given by showing equality of the objective function values of the primal and dual problem.
 We further show that the case $M = -E_n$ leads to the same objective function value as $M = - \vec m \vec m^T$ for any $\vec m\in \{-1,1\}^n$.
 Finally, we provide numerical evidence that the conjecture holds for $n\leq 8$.

For practical purposes it is important to keep in mind that the platform given $J$ matrix might also scale with the number of qubits resulting in a qubit dependent constant $\linfnorm{ A \oslash J_n} = \linfnorm{ M_n }$.

\subsection{Explicit solutions for \texorpdfstring{$M = -\vec m \vec m^T$}{M=mm*}}
\label{sec:explicit_worst_case}
The following lemma will be used in the proof of the explicit feasible solution of the dual problem for $M$ being of the form $M = -\vec m \vec m^T$.
We can identify $\vec m\in \{-1,1\}^n$ with $\vec b \in \FF_2^n$ via $\vec m = (-1)^{\vec b}$ as explained in \cref{sec:prev_work}.
\begin{lemma}
\label{lem:gram}
It holds that
 \begin{equation}
 \begin{aligned}
 P_{\vec b} \coloneqq \sum_{i<j}^n (-1)^{b_i \oplus b_j} = \binom{n}{2}-2 |\vec b | (n- |\vec b| ) \, ,
 \end{aligned}
 \end{equation}
 for any binary vector $\vec b \in \FF_2^n$.
 We denote the Hamming weight by $|\vec b|$.
\end{lemma}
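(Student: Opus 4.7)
The plan is to reduce the statement to a direct counting argument, making use of the identification $\vec m=(-1)^{\vec b}$ so that $(-1)^{b_i\oplus b_j}=m_i m_j$. With this substitution, $P_{\vec b}$ becomes $\sum_{i<j} m_i m_j$, which can be evaluated cleanly via the standard identity
\begin{equation*}
 2\sum_{i<j} m_i m_j = \Bigl(\sum_{i=1}^n m_i\Bigr)^{\!2} - \sum_{i=1}^n m_i^2 .
\end{equation*}

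First I would observe that $m_i^2=1$ for every $i$, so the second term equals $n$. Next, letting $k\coloneqq |\vec b|$ denote the Hamming weight, exactly $k$ of the $m_i$ equal $-1$ and $n-k$ of them equal $+1$, giving $\sum_i m_i=(n-k)-k=n-2k$. Substituting back, I obtain
\begin{equation*}
 P_{\vec b}=\frac{(n-2k)^2-n}{2}=\frac{n(n-1)}{2}-2k(n-k)=\binom{n}{2}-2|\vec b|(n-|\vec b|),
\end{equation*}
which is the claimed identity.

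As a sanity cross-check, I would also sketch the direct combinatorial argument: the term indexed by $(i,j)$ equals $+1$ when $b_i=b_j$ and $-1$ when $b_i\neq b_j$; there are $\binom{k}{2}+\binom{n-k}{2}$ pairs of the first type and $k(n-k)$ of the second, and the identity $\binom{k}{2}+\binom{n-k}{2}+k(n-k)=\binom{n}{2}$ then yields the same formula. There is essentially no obstacle here; the whole content is bookkeeping, and the only thing worth flagging is to make the passage $(-1)^{b_i\oplus b_j}=m_i m_j$ explicit so that the reader sees the connection to the encoding convention fixed in \cref{sec:prev_work}.
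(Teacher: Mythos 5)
Your proposal is correct. Your primary route — evaluating $P_{\vec b}=\sum_{i<j}m_im_j$ via the identity $2\sum_{i<j}m_im_j=\bigl(\sum_i m_i\bigr)^2-\sum_i m_i^2$ together with $\sum_i m_i=n-2|\vec b|$ and $\sum_i m_i^2=n$ — is a clean algebraic alternative to the paper's argument, which instead counts the $-1$ entries in the upper triangle of $\vec m\vec m^T$ directly (exactly the combinatorial bookkeeping you include as a cross-check: $|\vec b|(n-|\vec b|)$ pairs contribute $-1$ and the remaining $\binom{n}{2}-|\vec b|(n-|\vec b|)$ contribute $+1$). The two derivations are of essentially equal length and difficulty; the algebraic one avoids the case split on $|\vec b|=0$ that the paper uses, while the counting one makes the geometric picture of the rank-one matrix $\vec m\vec m^T$ explicit. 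Either is fine, and you correctly flag that the key translation $(-1)^{b_i\oplus b_j}=m_im_j$ should be stated explicitly.
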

\begin{proof}
Let $\vec{ m} = (-1)^{\vec b}$.
If the Hamming weight $|\vec b |$ vanishes, then $\vec{ m} = (+1, \dots , +1)$ and $P_{\vec b} = \binom{n}{2}$ which is the maximal value.
If $|\vec b | \neq 0$, $\vec{ m}$ contains $|\vec b |$ entries $-1$, such that the upper triangular part of $\vec{m} \vec{m}^T$ contains a rectangle of $-1$'s with length $| \vec b|$ and width $n-| \vec b|$ so the total amount of $-1$'s is $| \vec b|(n-| \vec b|)$.
Therefore,
\begin{equation}
 P_{\vec b} = \binom{n}{2}-2 |\vec b| (n- |\vec b | ) \, .
\end{equation}
\end{proof}

\begin{lemma}[explicit dual feasible solution]
\label{lem:lin_scal_ana_dual}
 Let $M = -E_n$, then there is an explicit feasible solution $\vec y$ to the dual \ac{LP}~\eqref{eq:dualLP} with
  \begin{equation}
  \innerp{-E_n}{\vec y} = \begin{cases}
   n \, , &\text{for odd } n \, , \\
   n-1 \, , &\text{for even } n \, .
  \end{cases}
 \end{equation}
\end{lemma}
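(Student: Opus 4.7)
The plan is to exhibit the dual feasible point by the natural symmetric ansatz $y_{ij}=c$ for all $i<j$, with $c$ chosen as large (in absolute value, negative) as feasibility permits. Since $\vec v(-E_n)=-\vec 1$, the objective under this ansatz evaluates to $-c\binom{n}{2}$, so we want $c$ as negative as possible. For the ansatz, the constraint $V^T\vec y\le \vec 1$ becomes
\begin{equation}
  c\sum_{i<j}m_im_j \;=\; c\,P_{\vec b} \;\le\; 1\qquad \text{for all } \vec b\in\FF_2^n \text{ with } b_n=0,
\end{equation}
where $\vec m=(-1)^{\vec b}$ and $P_{\vec b}$ is the quantity from \cref{lem:gram}. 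Because $c<0$, the binding constraints are those $\vec b$ that minimize $P_{\vec b}$, and we are then free to set $c=1/P_{\min}$.

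Next I would minimize $P_{\vec b}=\binom{n}{2}-2|\vec b|(n-|\vec b|)$ over $\vec b$ with $b_n=0$. Since $|\vec b|(n-|\vec b|)$ is maximized at $|\vec b|=\lfloor n/2\rfloor$ or $\lceil n/2\rceil$, and the constraint $b_n=0$ still admits these Hamming weights (any weight $w\le n-1$ is realizable), a short case split yields
\begin{equation}
  P_{\min}=\begin{cases} -(n-1)/2, & n \text{ odd},\\ -n/2, & n \text{ even}.\end{cases}
\end{equation}
Setting $c=1/P_{\min}$ (i.e.\ $c=-2/(n-1)$ for odd $n$ and $c=-2/n$ for even $n$) gives a feasible $\vec y$, and a direct computation yields
\begin{equation}
  \innerp{-E_n}{\vec y}\;=\;-c\binom{n}{2}\;=\;\begin{cases} n, & n \text{ odd},\\ n-1, & n \text{ even},\end{cases}
\end{equation}
as claimed.

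I do not expect any serious obstacle; the only place one has to be careful is verifying that the extremal Hamming weights $\lfloor n/2\rfloor$ and $\lceil n/2\rceil$ are attainable under the convention $b_n=0$ (which merely requires $n\ge 2$), and handling the sign flip in the inequality $cP_{\vec b}\le 1$ when passing from $P_{\vec b}>0$ (trivially satisfied for $c<0$) to $P_{\vec b}<0$ (which gives the binding condition $c\ge 1/P_{\vec b}$). Everything else reduces to the algebraic identity supplied by \cref{lem:gram} and a one-line evaluation of the objective.
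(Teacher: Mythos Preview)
Your proposal is correct and follows essentially the same route as the paper: both take the constant ansatz $y_{ij}=c$, reduce the dual constraints to $c\,P_{\vec b}\le 1$ via \cref{lem:gram}, identify $P_{\min}=-\lfloor n/2\rfloor$ at Hamming weight $\lfloor n/2\rfloor$ or $\lceil n/2\rceil$, set $c=1/P_{\min}$, and evaluate the objective as $\binom{n}{2}/\lfloor n/2\rfloor$. Your additional care in checking that the extremal Hamming weights are attainable under the convention $b_n=0$ is a detail the paper leaves implicit.
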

\begin{proof}
 We assume that $y = y_1 = y_2 = \dots$.
 Therefore, it suffices to show that
 \begin{equation}
 \label{eq:dual_constraint}
  y \sum_{i<j}^n (-1)^{b_i \oplus b_j} \leq 1 \, .
 \end{equation}
From \cref{lem:gram} we know that $\vec y = 1/\operatorname{min}(P_{\vec b}) \vec 1$ satisfies the constraint in \cref{eq:dual_constraint}.
The minimum $\operatorname{min}(P_{\vec b}) = - \lfloor n/2 \rfloor$ is reached for $|\vec b| = \lceil n/2 \rceil$ or $|\vec b| = \lfloor n/2 \rfloor$ which can be verified from the expression of $P_{\vec b}$ in \cref{lem:gram}.
Thus, we obtain $\vec y = -1/\lfloor n/2 \rfloor \vec 1$.
The objective function evaluates to
\begin{equation}
 \innerp{-E_n}{\vec y} = \vec -\vec 1^T \vec y = \frac{\binom{n}{2}}{ \lfloor n/2 \rfloor} =
   \begin{cases}
   n \, , &\text{for odd } n \, ,\\
   n-1 \, , &\text{for even } n \, .
  \end{cases}
\end{equation}
\end{proof}
For the construction of a feasible solution to the primal problem we first require the following result.
\begin{lemma}
 \label{lem:nchoosek_comb}
Let $ k < n$ be natural numbers.
Then
 \begin{equation}
 \label{eq:bi_opplus_bj}
  \sum_{\vec b \in \FF_2^n: | \vec b | = k} |b_i \oplus b_j | = 2 \binom{n-2}{ k-1 }
 \end{equation}
 for all $i,j \in [n]$ with $i\neq j$.
\end{lemma}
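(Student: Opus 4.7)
The sum counts binary vectors $\vec b$ of length $n$ and Hamming weight exactly $k$ for which the indicator $|b_i \oplus b_j|$ equals $1$, i.e.\ exactly one of the two coordinates $b_i, b_j$ is $1$. I would therefore approach this as a direct combinatorial identity by partitioning the summation set according to the pattern of $(b_i, b_j)$.

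The plan is to split the vectors $\vec b$ with $|\vec b| = k$ and $b_i \oplus b_j = 1$ into the two disjoint cases $(b_i, b_j) = (1, 0)$ and $(b_i, b_j) = (0, 1)$. In each case the values at positions $i$ and $j$ are fixed, so the remaining $n-2$ coordinates must carry exactly $k-1$ ones, giving $\binom{n-2}{k-1}$ choices per case. Summing the two cases yields $2\binom{n-2}{k-1}$, which is the claimed right-hand side. The condition $k < n$ ensures that $k-1 \le n-2$ so the binomial coefficient is nonzero/well-defined in the nontrivial range, and if $k = 0$ both sides vanish trivially since no term contributes.

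There is really no obstacle here; the step that needs the most care is just making sure the case split is exhaustive and disjoint (the cases $b_i = b_j = 0$ and $b_i = b_j = 1$ contribute $0$ to the sum and are correctly excluded), and that the independence of the remaining $n-2$ coordinates from positions $i$ and $j$ is invoked cleanly. I would write the argument in one short paragraph and conclude with the identity $\binom{n-2}{k-1} + \binom{n-2}{k-1} = 2\binom{n-2}{k-1}$.
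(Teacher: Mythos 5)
Your proof is correct, and it takes a genuinely different and cleaner route than the paper. The paper proves the identity by induction on $n$: it verifies a base case ($n=4$, $k=2$), then extends to $n+1$ by splitting on the value of the new coordinate $b_{n+1}$ and reducing to the cases $b_{n+1}=0$ and $b_{n+1}=1$, invoking Pascal's rule along the way. That argument also has to handle separately the case where both $i,j \leq n$ (covered by the inductive hypothesis) and the case $j = n+1$, which it does somewhat implicitly. Your approach instead counts directly: $|b_i \oplus b_j| = 1$ precisely when $(b_i,b_j)$ is $(1,0)$ or $(0,1)$, and in each case the remaining $n-2$ coordinates carry exactly $k-1$ ones, giving $\binom{n-2}{k-1}$ vectors per case and hence $2\binom{n-2}{k-1}$ in total. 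This is shorter, requires no induction or base-case verification, and sidesteps the $i \leq n$ versus $j = n+1$ distinction entirely. One small point worth making explicit in the write-up: the hypothesis $k < n$ (together with $k \geq 1$, which you note handles $k=0$ trivially since both sides vanish) guarantees $0 \leq k-1 \leq n-2$, so the binomial coefficient is the genuine count of weight-$(k-1)$ vectors on the remaining $n-2$ positions.
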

\begin{proof}
 Consider the case $n=4$ and $k=2$.
 We get $\binom{4}{2} = 6$ binary vectors with $| \vec b | = 2$.
 It can be easily verified that $\sum_{| \vec b | = 2}|b_i \oplus b_j | = 4 = 2 \binom{2}{ 1 }$.
 Now, we assume that for a given $n$ and $k < n$ the \cref{eq:bi_opplus_bj} holds.
 It suffices to verify \cref{eq:bi_opplus_bj} for $i \leq n$ and $j = n+1$.
 We fix $k$, define $n^\prime \coloneqq n+1$ and take a $\vec b \in \FF_2^{n^\prime}$.
 We have $\binom{n+1}{ k }$ binary vectors with $| \vec b | = k$.

 For the case $b_{n+1} = 0$ we have $\binom{n}{ k }$ such vectors and
 \begin{equation}
  \sum_{\substack{| \vec b | = k \\ b_{n+1} = 0 } }|b_i \oplus b_{n+1} | = \sum_{| \vec b | = k }|b_i| = \binom{n-1}{ k-1 } \, ,
 \end{equation}
 for all $i \leq n$.
 There are $\binom{n-1}{ k-1 }$ different ways in placing $k-1$  $1$'s.

 For the case $b_{n+1} = 1$ we have $\binom{n}{ k-1 }$ such vectors and
  \begin{equation}
  \sum_{\substack{| \vec b | = k \\ b_{n+1} = 1 } }|b_i \oplus b_{n+1} | = \sum_{| \vec b | = k }|b_i \oplus 1 | = \binom{n-1}{ k-1 } \, ,
 \end{equation}
 for all $i \leq n$.
 There are $\binom{n-1}{ k-1 }$ different ways in placing $k-1$  $0$'s.

 Combining the two cases $b_{n+1} = 0$ and $b_{n+1} = 1$ we obtain
 \begin{equation}
  \sum_{| \vec b | = k } |b_i \oplus 0 |+|b_i \oplus 1 | = 2 \binom{n-1}{ k-1 } = 2 \binom{n^\prime -2}{ k-1 } \, ,
 \end{equation}
 for all $i \leq n$.
\end{proof}

We motivate the next lemma with the result of the explicit dual feasible solution for $M = -E_n$ from \cref{lem:lin_scal_ana_dual} and the complementary slackness condition.
The complementary slackness condition for a \acl{LP} states that if the $i$-th inequality of the dual problem is a strict inequality for a feasible solution $\vec y$, then the $i$-th component of a feasible solution of the primal problem $\vec \lambda$ is zero:
\begin{equation}
\label{eq:compl_slack}
 (V^T \vec y)_i < 1 \Rightarrow \lambda_i = 0 \, .
\end{equation}
We use this in the following lemma to construct a feasible solution for the primal \ac{LP}~\eqref{eq:LP1}.
\begin{lemma}[explicit primal feasible solution]
\label{lem:lin_scal_ana_prim}
 Let $M = -E_n$, then there is an explicit feasible solution $\vec \lambda$ to the primal \ac{LP}~\eqref{eq:LP1} with
  \begin{equation}
  \vec 1^T \vec \lambda = \begin{cases}
   n \, , &\text{for odd } n \, ,\\
   n-1 \, , &\text{for even } n \, .
  \end{cases}
 \end{equation}
\end{lemma}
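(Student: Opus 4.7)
The plan is to invoke complementary slackness against the dual feasible solution $\vec y^{*} = -\tfrac{1}{\lfloor n/2 \rfloor}\vec 1$ constructed in \cref{lem:lin_scal_ana_dual}. That solution saturates the dual constraint $(V^T\vec y)_{\vec m}\leq 1$ precisely at encodings $\vec m=(-1)^{\vec b}$ for which $P_{\vec b}$ from \cref{lem:gram} attains its minimum $-\lfloor n/2\rfloor$, namely those with Hamming weight $|\vec b|\in\{\lfloor n/2\rfloor,\lceil n/2\rceil\}$. Hence by \cref{eq:compl_slack} any optimal primal $\vec\lambda$ must be supported on (the equivalence classes, under \cref{eq:m_sym}, of) these encodings, and I will restrict the ansatz to this support from the start.

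Guided by the sign-symmetry of the problem, I will next try the fully symmetric ansatz $\lambda_{\vec m}=c$ on every active equivalence class and $\lambda_{\vec m}=0$ elsewhere. The $(i,j)$-th constraint $\sum_{\vec m}\lambda_{\vec m} m_i m_j = -1$ then reduces to $c$ times $\sum_{|\vec b|=k}(-1)^{b_i\oplus b_j}$ with $k=\lfloor n/2\rfloor$, up to a factor $\tfrac12$ in the even case to account for the fact that $\vec b$ and its complement $\vec 1-\vec b$ represent the same class. Writing $(-1)^{b_i\oplus b_j}=1-2|b_i\oplus b_j|$ and applying \cref{lem:nchoosek_comb}, this inner sum evaluates to $\binom{n}{k}-4\binom{n-2}{k-1}$, a quantity that is manifestly independent of the pair $(i,j)$. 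This constancy is the crucial structural input: a single scalar $c$ simultaneously satisfies all $\binom{n}{2}$ primal constraints, uniquely fixing $c$ and certifying feasibility.

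It remains to compute the objective $\vec 1^T\vec\lambda$, which equals $c$ times the number of active equivalence classes, namely $\binom{n}{\lfloor n/2\rfloor}$ for odd $n$ and $\tfrac12\binom{n}{n/2}$ for even $n$. A short simplification of the ratio $\binom{n}{k}/\bigl[\binom{n}{k}-4\binom{n-2}{k-1}\bigr]$ using the standard identities $\binom{2m+1}{m}=\tfrac{2m+1}{m+1}\binom{2m}{m}$ and $\binom{2m}{m}=\tfrac{2(2m-1)}{m}\binom{2m-2}{m-1}$ then collapses the expression to $n$ in the odd case and to $n-1$ in the even case, matching \cref{lem:lin_scal_ana_dual} and confirming optimality via strong duality. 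I expect the main obstacle to lie in the parity-dependent bookkeeping around the symmetry \cref{eq:m_sym}: for odd $n$ the two relevant weights $\lfloor n/2\rfloor$ and $\lceil n/2\rceil$ are distinct and each class contains one representative of each weight, while for even $n$ they coincide and each class contains two weight-$n/2$ representatives $\vec b$ and $\vec 1-\vec b$. Once this doubling is tracked consistently, both parities fall out of essentially the same calculation.
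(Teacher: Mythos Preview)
Your proposal is correct and follows essentially the same approach as the paper: both restrict the primal support to the encodings saturating the dual constraint from \cref{lem:lin_scal_ana_dual} (motivated by complementary slackness), take a uniform weight on this support, use \cref{lem:nchoosek_comb} to show that $\sum_{|\vec b|=k}(-1)^{b_i\oplus b_j}$ is independent of $(i,j)$, and then evaluate the resulting binomial ratio. The only cosmetic difference is the bookkeeping of the symmetry \eqref{eq:m_sym}: the paper selects one representative per class by fixing $b_n=0$, whereas you sum over all $\vec b$ of the relevant weight and insert the factor $\tfrac12$ in the even case --- both parametrizations yield the same counts and the same objective value.
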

\begin{proof}
For this proof we define
\begin{equation}
 k \coloneqq \begin{cases}
   n \, , &\text{for odd } n \, , \\
   n-1 \, , &\text{for even } n \, .
  \end{cases}
\end{equation}
We only consider binary vectors of the set $S \coloneqq \Set*{\vec b \in \FF_2^n \given |\vec b| = \lceil n/2 \rceil, \lfloor n/2 \rfloor \text{ and } b_n=0}$.
This set is motivated by the complementary slackness condition and \cref{lem:lin_scal_ana_dual}.
It can be calculated that $|S| = \binom{k}{ \lfloor k/2 \rfloor }$ using the recurrence relation
\begin{equation}
 \binom{n}{k} = \binom{n-1}{k} + \binom{n-1}{k-1}
\end{equation}
for the binomial coefficients.
We show that
\begin{equation}
\label{eq:sum_prim_sol}
 \sum_{\vec b \in S} (-1)^{b_i \oplus b_j} = - D_n \, ,
\end{equation}
for a constant $D_n>0$, which we calculate later.
If not explicitly stated, all equations in this proof containing $i$, $j$ hold for all $i, j \in [n]$, $i \neq j$.
We denote $\vec \lambda_S$ by all $\lambda_{\vec m} = \lambda_{\vec b}$ corresponding to the encoding $\vec m = (-1)^{\vec b}$ with $\vec b \in S$.
If \cref{eq:sum_prim_sol} holds, we can choose a $\vec \lambda_S = 1 / D_n \vec 1$ resulting in
\begin{equation}
\label{eq:sum_prim_sol_minusOne}
 \sum_{\vec b \in S} \vec \lambda_{\vec b} (-1)^{b_i \oplus b_j} = -1 \, ,
\end{equation}
which implies feasibility for $M = - E_n$.
It is left to show \cref{eq:sum_prim_sol} and determine $D_n$.

First, we consider odd $n$.
By definition of $S$ we have that $b_n = 0$ for all $\vec b$.
Therefore, we obtain $\binom{n-1}{ \lfloor n/2 \rfloor } + \binom{n-1}{ \lceil n/2 \rceil }$ binary vectors with $|\vec b| = \lceil n/2 \rceil$ or $|\vec b| = \lfloor n/2 \rfloor$ respectively. Counting the occurrences of ``$-1$'' in the sum of \cref{eq:sum_prim_sol} is equivalent to counting the occurrences of ``$1$'' in the sum
\begin{equation}
\begin{aligned}
 \sum_{\vec b \in S} |b_i \oplus b_j| &= 2 \left( \binom{n-3}{ \lfloor \frac n2 \rfloor -1 } + \binom{n-3}{ \lceil \frac n2 \rceil -1 } \right) \\
 &= 2 \binom{n-2}{ \lfloor \frac n2 \rfloor } \, ,
\end{aligned}
\end{equation}
where we used \cref{lem:nchoosek_comb}, the recurrence relation for the binomial coefficients and $\lceil n/2 \rceil -1 = \lfloor n/2 \rfloor$.
Counting the occurrences of ``$+1$'' in the sum of \cref{eq:sum_prim_sol} yields
\begin{equation}
 \binom{n-1}{ \lfloor \frac n2 \rfloor } + \binom{n-1}{ \lceil \frac n2 \rceil } - 2 \binom{n-2}{ \lfloor \frac n2 \rfloor } = \binom{n}{ \lfloor \frac n2 \rfloor } - 2 \binom{n-2}{ \lfloor \frac n2 \rfloor } \, .
\end{equation}
where we used $\binom{n}{ \lfloor n/2 \rfloor +1 } = \binom{n}{ \lfloor n/2 \rfloor }$ for odd $n$.
We now evaluate \cref{eq:sum_prim_sol} for odd $n$
\begin{equation}
\label{eq:sum_prim_sol_odd}
\begin{aligned}
 -D_n = \sum_{\vec b \in S} (-1)^{b_i \oplus b_j} &= \binom{n}{ \lfloor \frac n2 \rfloor } - 4 \binom{n-2}{ \lfloor \frac n2 \rfloor } \\
 &= \binom{n}{ \lfloor \frac n2 \rfloor } - \frac{n+1}{n} \binom{n}{ \lfloor \frac n2 \rfloor } \\
 &= \frac{-1}{n} \binom{n}{ \lfloor \frac n2 \rfloor } \, .
\end{aligned}
\end{equation}
The case for even $n$ follows the same steps as for odd $n$, resulting in
\begin{equation}
\label{eq:prim_sol_even_minus}
\begin{aligned}
 -D_n = \sum_{| \vec b | =  \frac n2 } (-1)^{b_i \oplus b_j} &=\binom{n-1}{ \frac n2 } - \frac{n}{n-1} \binom{n-1}{ \frac n2 } \\
 &= \frac{-1}{n-1} \binom{n-1}{ \frac n2 } \, .
\end{aligned}
\end{equation}
\Cref{eq:sum_prim_sol_odd,eq:prim_sol_even_minus} show that $D_n = 1/k \binom{k}{ \lfloor k/2 \rfloor }$ and that \cref{eq:sum_prim_sol} holds.
Since $|S| = \binom{k}{ \lfloor k/2 \rfloor }$ the objective function value is
\begin{equation}
 \vec 1^T \vec \lambda_S = \binom{k}{ \lfloor k/2 \rfloor } \frac{1}{\frac 1k \binom{k}{ \lfloor k/2 \rfloor }} = k \, .
\end{equation}
\end{proof}
From the equality of the objective functions for the primal and dual problem from \cref{lem:lin_scal_ana_prim} and \cref{lem:lin_scal_ana_dual} respectively we know that the proposed dual/primal feasible solutions for $M = -E_n$ are in fact optimal solutions.
Now, we show that the $\Gate{GZZ}$ gate time with $M = -\vec m \vec m^T$ for any $\vec m \in \{-1 , +1 \}^n$ is the same as for the case $M = -E_n$.
\begin{theorem}
\label{prop:worst_case_sol}
 If $A \oslash J \eqqcolon M = -C (\vec m \vec m^T)$ for any $\vec m \in \{-1 , +1 \}^n$ and $C\geq0$, then the optimal gate time of $\Gate{GZZ}(A)$ is
 \begin{equation}
  \vec 1^T \vec \lambda^* = \linfnorm{ M } \begin{cases}
   n \, , &\text{for odd } n \, ,\\
   n-1 \, , &\text{for even } n \, .
  \end{cases}
 \end{equation}
\end{theorem}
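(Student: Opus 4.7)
The plan is to reduce the general rank-one case $M = -C\,\vec m \vec m^T$ to the endpoint case $M = -E_n$ already resolved in \cref{lem:lin_scal_ana_prim,lem:lin_scal_ana_dual}. The bridge I would use is a diagonal sign symmetry of \ac{LP}~\eqref{eq:LP1}: for every $\vec s \in \{-1,+1\}^n$ with $D_{\vec s} \coloneqq \diag(\vec s)$, conjugation by $D_{\vec s}$ preserves the set of rank-one generators, because $D_{\vec s}(\vec m' \vec m'^T) D_{\vec s} = (D_{\vec s}\vec m')(D_{\vec s}\vec m')^T$ and the map $\vec m' \mapsto D_{\vec s}\vec m'$ is an involutive bijection on $\{-1,+1\}^n$ that commutes with the identification $\vec m' \sim -\vec m'$.

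Using this, the first step is to observe that the relabelling $\tilde\lambda_{\vec m'} \coloneqq \lambda_{D_{\vec s}\vec m'}$ carries any feasible solution of \ac{LP}~\eqref{eq:LP1} with right-hand side $\vec v(M)$ to a feasible solution with right-hand side $\vec v(D_{\vec s} M D_{\vec s})$, while preserving the objective $\vec 1^T \vec\lambda$. Applying the same argument in reverse shows this is a bijection between the two feasible polytopes, so both \ac{LP} instances share the same optimal value.

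The second step is to specialise to $\vec s = \vec m$, for which $D_{\vec m}\vec m = \vec 1$ and therefore $D_{\vec m}(-C\,\vec m \vec m^T) D_{\vec m} = -C\,\vec 1 \vec 1^T$. Projecting onto $\HTL(\RR^n)$ (equivalently, discarding the diagonal, which only contributes an energy offset to the target Hamiltonian) turns this into $-C E_n$. Since \ac{LP}~\eqref{eq:LP1} is homogeneous in its right-hand side, I can pull out the factor $C = \linfnorm{M}$, after which \cref{lem:lin_scal_ana_prim,lem:lin_scal_ana_dual} combined with weak duality finish the job and deliver the claimed value $\vec 1^T \vec\lambda^* = \linfnorm{M}\cdot n$ for odd $n$ and $\linfnorm{M}\cdot(n-1)$ for even $n$.

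The only delicate point, and the main obstacle I would have to handle explicitly, is keeping the representative convention $m_n = +1$ used to eliminate the $\vec m \sim -\vec m$ redundancy consistent across the symmetry: if $(D_{\vec s}\vec m')_n = -1$, one must pass to $-D_{\vec s}\vec m'$, which leaves the outer product invariant by \cref{eq:m_sym}. This bookkeeping is harmless but must be carried out cleanly for the bijection on the $2^{n-1}$ index set to be well defined; everything else reduces to linearity of $\mathcal V$ and direct citation of the endpoint lemmas.
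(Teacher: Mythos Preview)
Your proposal is correct and takes essentially the same approach as the paper: both reduce $M=-C\,\vec m\vec m^T$ to the already-treated case $M=-E_n$ via the sign-flip symmetry $\vec m'\mapsto D_{\vec m}\vec m'$ (equivalently $\vec b\mapsto \vec b\oplus\vec c$ in the paper's binary notation), noting that this map is a bijection on the index set modulo $\vec m'\sim-\vec m'$ and preserves the objective. The only cosmetic difference is that you phrase the reduction as a bijection of primal feasible solutions while the paper transports the explicit dual solution from \cref{lem:lin_scal_ana_dual}; the bookkeeping issue you flag with the $m_n=+1$ convention is exactly the point the paper glosses over when it calls the map a ``permutation'' of the set $\{\vec b:b_n=0\}$.
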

\begin{proof}
 The statement has been shown for the case $M = -E_n$ by constructing an explicit solution.
 It is left to show that the cases $M = -C (\vec m \vec m^T)$ for a constant $C\geq0$ yield the same objective function value.
 Since the objective function and the constraints are linear we can w.l.o.g.\ assume $C=1$.
 It is clear, that $\operatorname{sign}(M_{ij}) = \operatorname{sign}(-m_i m_j) = -(-1)^{c_i \oplus c_j}$ for all $i<j$, with $\vec m = (-1)^{\vec c}$.
Let $\vec y = -y \operatorname{sign}(\vec v(M))$ for a $y \in \R$, then each constraint of $V^T \vec y \leq \vec 1$ of the dual \ac{LP}~\eqref{eq:dualLP} reads as
\begin{equation}
 y \sum_{i<j} (-1)^{b_i \oplus b_j \oplus c_i \oplus c_j} = y \sum_{i<j} (-1)^{\tilde b_i \oplus \tilde b_j } \leq 1 \, ,
\end{equation}
with $\vec{\tilde b} \coloneqq \vec b \oplus \vec c$ element wise.
Consider the ordered set of all $\vec{b} \in \FF_2^n$ with $b_n = 0$, then $\vec{\tilde b}$ is just a permutation of that set.
Due to the permutation symmetry of the qubits the optimal value of the \ac{LP}~\eqref{eq:LP1} for any $M = - (\vec m \vec m^T)$ is the same as for the case $M = - E_n$.
Setting $\vec y = -y \operatorname{sign}(\vec v(M))$ with $y=1/\lfloor n/2 \rfloor$ as in the proof of \cref{lem:lin_scal_ana_dual} yields an optimal solution to the dual \ac{LP}~\eqref{eq:dualLP} for the case $M = - (\vec m \vec m^T)$.
\end{proof}
Note that, trivially, the lower bound $\vec 1^T \vec \lambda^* = \linfnorm{ M }$ is reached if $M = C (\vec m \vec m^T)$ for any $\vec m \in \{-1 , +1 \}^n$ and $C\geq0$.

One possibility to prove \cref{conj:gate-time} is to show, that the matrix $M = -E_n$ maximizes the value of the \ac{LP}~\eqref{eq:LP1} among all matrices $M \in \HTL([-1, +1]^n)$.
To this end, consider the \ac{LP}
\begin{maxi}
{}{ \lonenorm{ \vec y }}{}{}
\label{eq:upper_bound_LP}
\addConstraint{V^T \vec y } {\leq \vec 1}{}{}
\addConstraint{\vec y}{\in \R^{\binom{n}{2}}}{}{} \, ,
\end{maxi}
which is independent of $M \in \HTL([-1, +1]^n)$.
It holds that
\begin{equation}
 \max_{V^T \vec y \leq \vec 1} \left( \max_{\linfnorm{M} \leq 1} \innerp{M}{\vec y} \right) = \max_{V^T \vec y \leq \vec 1} \lonenorm{ \vec y }
\end{equation}
according to $\lonenorm{ \vec x } = \max_{\linfnorm{\vec p} \leq 1} \vec p^T \vec x $.
Therefore, the optimal objective value of \ac{LP}~\eqref{eq:upper_bound_LP} is an upper bound on all optimal objective values of \ac{LP}~\eqref{eq:LP1}.
Clearly, the constructed solution in \cref{lem:lin_scal_ana_dual} is feasible for \ac{LP}~\eqref{eq:upper_bound_LP}.
Unfortunately, proving that this constructed solution is optimal is quite challenging, as we discuss in \cref{app:proof_problems}.

\begin{figure}[H]
	\centering
    \includegraphics[width=0.52\linewidth]{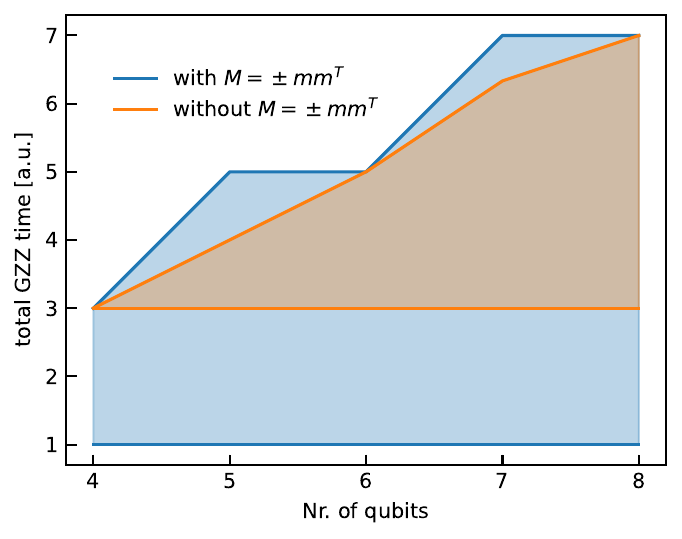}
	\caption{
			The optimal objective function value of the dual \ac{LP}~\eqref{eq:dualLP} over the number of qubits.
			\textbf{Blue:} The range of optimal values for all binary $M \in \HTL(\{-1, +1\}^n)$.
			\textbf{Orange:} The range of optimal values for all binary $M \in \HTL(\{-1, +1\}^n)$ without $M = \pm (\vec m \vec m^T)$ for any $\vec m \in \{-1 , +1 \}^n$.
	}
	\label{fig:worst_case_GZZ}
\end{figure}

\section{Numerical results}
\label{sec:numerics}
\subsection{Numeric validation of Conjecture \ref{conj:gate-time} for small \texorpdfstring{$n$}{n}}
For the numeric validation of the conjecture for small $n$ we solve the dual \ac{LP}~\eqref{eq:dualLP} for all binary $M \in \HTL(\{-1, +1\}^n)$ of which there are $2^{\binom{n}{2}}$.
For $n=3$ there are only binary matrices of the form $M = \pm (\vec m \vec m^T)$ and by \cref{prop:worst_case_sol} the conjecture holds.
\Cref{fig:worst_case_GZZ} shows that \cref{conj:gate-time} holds for $n\leq 8$.
For odd $n \leq 8$ the cases $M = - (\vec m \vec m^T)$ are in fact the only cases reaching the upper bound.
This can be seen in \cref{fig:worst_case_GZZ} by the blue area exceeding the orange area, which only consists of the optimal values for all binary matrices without $M = \pm (\vec m \vec m^T)$ for any $\vec m \in \{-1 , +1 \}^n$.

\subsection{Numerical benchmark for the heuristic}

\begin{figure}[t]
	\centering
    \includegraphics[width=0.496\linewidth]{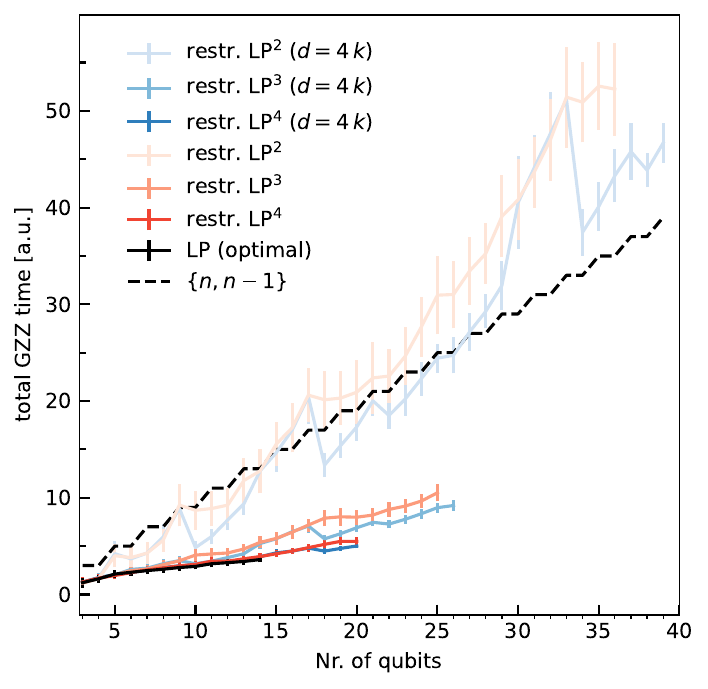}
    \includegraphics[width=0.496\linewidth]{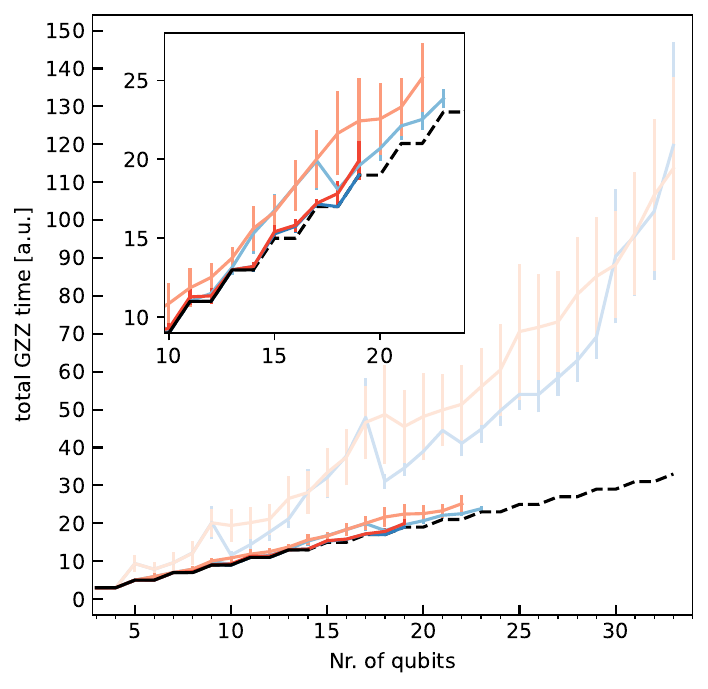}
	\caption{
			Comparing the performance of the original (optimal) \ac{LP}~\eqref{eq:LP1} and the restricted \ac{LP}$^j$~\eqref{eq:LP_poly_hierarchy} for $j = 2,3,4$.
			For each line we let the \ac{LP}'s run for a fixed time.
			The black dashed line is the upper bound for the original \ac{LP}~\eqref{eq:LP1} from \cref{conj:gate-time}.
			The reddish lines show the total $\Gate{GZZ}$ times for the restricted \ac{LP}$^j$~\eqref{eq:LP_poly_hierarchy} for $j = 2,3,4$.
			The blueish lines show the total $\Gate{GZZ}$ times for the restricted \ac{LP}$^j$~\eqref{eq:LP_poly_hierarchy} using Hadamard matrices of dimension $d=4\,k$ to construct the restricted linear operator $V^{[j]}$.
			\textbf{Left:} Average case scaling of the total $\Gate{GZZ}$ times.
			The data points and error bars show the mean and the standard deviation over $100$ uniformly sampled matrices $A_{ij} = A_{ji} \in [-1, 1]$ for $i<j$.
			\textbf{Right:} Conjectured worst-case scaling of the total $\Gate{GZZ}$ times.
			The data points and error bars show the mean and the standard deviation over $100$ binary matrices $A = - \vec{m} \vec{m}^T$ with uniformly sampled encodings $\vec{m} \in \{-1,1\}^n$.
	}
	\label{fig:eff_GZZ}
\end{figure}

We compare the performance of the restricted \ac{LP}$^j$~\eqref{eq:LP_poly_hierarchy} to the original \ac{LP}~\eqref{eq:LP1}.
To this end we provide numerical results on the average-case and the conjectured worst-case scaling of the total $\Gate{GZZ}$ time.
The left plot in \cref{fig:eff_GZZ} shows the average-case scaling of the total $\Gate{GZZ}$ time.
$\Gate{GZZ}(A)$ gates with uniformly sampled matrix elements $A_{ij} = A_{ji} \in [-1, 1]$ for all $i<j$ and $i,j \in [n]$ are synthesized.
For the worst-case scaling of the right plot in \cref{fig:eff_GZZ}, $\Gate{GZZ}(A)$ gates with binary matrices $A = - \vec{m} \vec{m}^T$ with uniformly sampled encodings $\vec{m} \in \{-1,1\}^n$ are synthesized.
We note that these are the matrices with the conjectured worst-case scaling for the \ac{LP} and that the \ac{LP}$^j$ might have different worst-case matrices $A$.
For convenience, we have set $M = A$, i.e.\ setting $J = E_n$ which omits the hardware specific time units given by the quantum platform.
For realistic $J$'s and $\Gate{GZZ}(A)$ gate times we refer to \cite{basler_synthesis_2022}.
The Python package CVXPY \cite{diamond2016cvxpy,agrawal2018rewriting} with the GNU linear program kit simplex solver \cite{GLPK} is used to solve the \ac{LP}~\eqref{eq:LP1} and the restricted \ac{LP}$^j$~\eqref{eq:LP_poly_hierarchy}.

\Cref{fig:eff_GZZ} shows the total $\Gate{GZZ}$ time over the number of qubits $n$ for the restricted \ac{LP}$^j$~\eqref{eq:LP_poly_hierarchy} and \ac{LP}~\eqref{eq:LP1}.
We assigned a fixed runtime ($20$ minutes) for each \ac{LP} to synthesize $\Gate{GZZ}(A)$ gates for all sample matrices $A$ and as many $n$ as possible.
Clearly, the runtime of the heuristic algorithm is much shorter than the runtime of the optimal \ac{LP}~\eqref{eq:LP1}.
Increasing the hierarchy of the heuristic $j>2$ reduces the total $\Gate{GZZ}$ time significantly while still maintaining a short runtime.
The total $\Gate{GZZ}$ time obtained from the heuristic also seems to scale linear with the number of qubits although with a different scaling constant.
As mentioned in \cref{sec:ana_GZZ,sec:LP_approx} the size of the restricted \ac{LP}$^j$~\eqref{eq:LP_poly_hierarchy}, and therefore the runtime, can be further reduced.
This reduction is achieved by using Hadamard matrices of dimension $d = 4\,k$ with $k \in \NN$ instead of Sylvester-Hadamard matrices of dimension $d = 2^k$ in the construction of the restricted linear operator $V^{[j]}$.
This is based on the Hadamard conjecture, which is known to be true for $d \leq 668$ \cite{kharaghani_2005,dokovic_2014}.
Surprisingly, using these Hadamard matrices with $d=4\,k$ not only yields shorter runtime of the heuristic algorithm but also a significant reduction of the total $\Gate{GZZ}$ time compared to the original restricted \ac{LP}$^j$~\eqref{eq:LP_poly_hierarchy}.

Our numerical results show that the heuristic algorithm approximates well the optimal total $\Gate{GZZ}$ time, while maintaining a short runtime.
This holds true for both, the average-case and the conjectured worst-case scaling.
Therefore, we hope that this heuristic will prove to be an important tool to implement fast $\Gate{GZZ}$ gates in practice.

\section{Conclusion}
We investigated the time-optimal multi-qubit gate synthesis introduced in Ref.~\cite{basler_synthesis_2022}.
We show that synthesizing time-optimal multi-qubit gates in our setting is \NP-hard.
However, we also provide explicit solutions for certain cases with constant gate time and a polynomial-time heuristics to synthesize fast multi-qubit gates.
Our numerical simulations suggest that these heuristics provide good approximations to the optimal $\Gate{GZZ}$ gate time.
Furthermore, tight bounds on the scaling of the optimal multi-qubit gate times were shown.
More precisely, we showed that the optimal multi-qubit gate time scales at most as $\lonenorm{ A \oslash J }$, the $\ell_1$-norm of the element-wise division of the total and physical coupling matrices $A$ and $J$, respectively.
We also conjectured that the optimal $\Gate{GZZ}$ gate time scales at most linear with the number of qubits.
Our results are practical to estimate the execution time of a given circuit, where the entangling gates are implemented as $\Gate{GZZ}$ gates.
The execution time is a crucial parameter, in particular, in the \ac{NISQ} era since it is limiting the length of a gate sequence due to finite coherence time.

It is our hope to proof the conjectured linear scaling of the optimal $\Gate{GZZ}$ gate time in the near future.
Moreover, we would like to test and verify our proposed time-optimal multi-qubit gate synthesis methods in an experiment.
Depending on the quantum platform we would like to develop adapted error mitigation schemes for the $\Gate{GZZ}$ gates and investigate their robustness against errors.

\section*{Acknowledgements}
We are grateful to Lennart Bittel and Arne Heimendahl for fruitful discussions on complexity theory and convex optimization, respectively.
We also want to thank Frank Vallentin for valuable comments on our conjecture and proof ideas.

This work has been funded by the German Federal Ministry of Education and Research (BMBF) within the funding program ``quantum technologies -- from basic research to market'' via the joint project MIQRO (grant number 13N15522) and 
the Fujitsu Services GmbH as part of an endowed professorship ``Quantum Inspired and Quantum Optimization''. 

\newpage
\section*{Appendices}
\appendix

\section{Challenges in proving Conjecture \ref{conj:gate-time}}
\label{app:proof_problems}
In this section, we want to discuss some obstacles encountered trying to proof \cref{conj:gate-time}.
\cref{conj:gate-time} holds, if we show that our constructed solutions from \cref{sec:explicit_worst_case} are optimal solutions for \ac{LP}~\eqref{eq:upper_bound_LP}.
Meaning that there is no other feasible solution resulting in a larger objective function value compared to our constructed solution.

We tried an inductive proof which turned out to be intricate due to the additional degrees of freedom in each induction step.
Furthermore, we utilized the connection to graph theory from \cref{lem:cut_ellip} to transform the \ac{LP}~\eqref{eq:upper_bound_LP} to a \ac{LP} over the cut polytope.
There, the challenge is the affine mapping between the elliptope and the cut polytope in \cref{def:outprod} and \cref{def:cut_polytope}, which alters the optimization problem crucially.
In the following, we discuss another approach based on showing the sufficiency of the \ac{KKT} conditions in more detail.

\subsection{Concave program}
A convex linear program in standard form minimizes a convex objective function over a convex set.
But \ac{LP}~\eqref{eq:upper_bound_LP} \emph{maximizes} a convex objective function over a convex set.
Such optimizations are called \emph{concave programs}.
It is known that the maximum is attained at the extreme points of the polytope $V^T \vec y \leq \vec 1$ and therefore might have many local optima \cite{Phillips1988}.
There are several equivalent sufficient conditions for optimality \cite{dur_necessary_1998}.
Here, we investigate one in detail.
First, we define the \emph{conjugate function} for a function $f: \R^n \rightarrow \R$
\begin{equation}
 f^* (\vec x) \coloneqq \sup \Set*{ \vec x^T \vec y - f(\vec y) \given \vec y \in \mathcal D (f)} \, ,
\end{equation}
with $\mathcal D (f)$ the domain of $f$.
Furthermore, we need the \emph{support function} $h_A: \R^n \rightarrow \R$ for a closed convex set $A$
\begin{equation}
 h_A(\vec x ) \coloneqq \sup \Set*{ \vec x^T \vec y \given  \vec y \in A} \, .
\end{equation}
Then the sufficient optimality condition in our case is
\begin{equation}
 \lonenorm{ \vec y^* } = \sup \Set*{ h_A(\vec x ) - (\lonenorm{ \vec x })^* \given  \vec y \in \R^{\binom{n}{2}} } \, ,
\end{equation}
with $A = \Set*{ \vec x \in \R^{\binom{n}{2}} \given  V^T \vec x \leq \vec 1}$.
The conjugate function of $\lonenorm{ \vec x }$ is
\begin{equation}
 (\lonenorm{ \vec x })^* = \begin{cases}
   0 \, , &\text{if } \linfnorm{ \vec x } \leq 1 \, ,\\
   \infty \, , &\text{otherwise} \, .
  \end{cases}
\end{equation}
Then we have
\begin{equation}
\begin{aligned}
 \lonenorm{ \vec y^* } &= \sup \Set*{ h_A (\vec y)  \given  \linfnorm{ \vec y } \leq 1} \\
 &= \sup \Set*{ \vec x^T \vec y  \given  \linfnorm{ \vec y } \leq 1, V^T \vec x \leq \vec 1 } \\
 &= \sup \Set*{ \lonenorm{ \vec x } \given V^T \vec x \leq \vec 1 } \, ,
\end{aligned}
\end{equation}
which is the same formulation as the original \ac{LP}~\eqref{eq:upper_bound_LP}.

\subsection{Dualization}
If we can formulate the dual \ac{LP} to the primal \ac{LP}~\eqref{eq:upper_bound_LP} and find a feasible solution, then the dual objective function value upper bounds the primal objective function value by weak duality \cite{bazaraa2013nonlinear}.
The standard form of an optimization problem with only linear inequality constraints is
\begin{mini}
{}{ f(\vec y)}{}{}
\addConstraint{A \vec y } {\leq\vec b}{}{}.
\end{mini}
Then, the Lagrange dual function is given by
\begin{equation}
 g(\vec \lambda) = - \vec b^T \vec \lambda - f^* (-A^T \vec \lambda) \, ,
\end{equation}
where $f^*$ denotes the conjugate function \cite{Boyd2009convexoptimization}.
For \ac{LP}~\eqref{eq:upper_bound_LP} we have $f(\vec y ) = - \lonenorm{ \vec y }$ (minus sign due to the minimization in the standard optimization form).
\begin{equation}
\begin{aligned}
 (- \lonenorm{ \vec x })^* &= \sup \Set*{ \vec x^T \vec y + \lonenorm{ \vec x } \given \vec y \in \R^{\binom{n}{2}}}  \\
 &= \sup \Set*{ \vec x^T \vec y + \vec x^T \vec z \given \vec y \in \R^{\binom{n}{2}}, \linfnorm{\vec z} \leq 1} \\
 &= \sup \Set*{ \vec x^T \vec y \given \vec y \in \R^{\binom{n}{2}}} \, ,
\end{aligned}
\end{equation}
which clearly is unbounded if $\vec x \neq \vec 0$.
Therefore, we cannot formulate the dual to \ac{LP}~\eqref{eq:upper_bound_LP}.

\subsection{Invexity}
The \ac{KKT} conditions are optimality conditions for non-linear optimization problems.
Invexity is a generalization of convexity in the sense that the \ac{KKT} conditions are necessary and sufficient for optimality \cite{hanson_invexity_1999}.
By invexity we mean that the objective and constraint functions of the optimization problem are \emph{Type 1 invex} functions.
\begin{definition}
Consider the standard form of an optimization problem
\begin{mini}
{}{ f(\vec y)}{}{}
\label{eq:invexity}
\addConstraint{g( \vec y ) } {\leq\vec 0}{}{}
\addConstraint{ \vec y } {\in \mathcal S}{}{},
\end{mini}
where $\mathcal S \subseteq \R^m$ is defined by $g( \vec y ) \leq\vec 0$.
Then $f$ and $g$ are called \emph{Type 1 invex} functions at point $\vec y^* \in \mathcal S$ w.r.t.\ a common function $\eta (\vec y, \vec y^*) \in \R^m$, if for all $\vec y \in \mathcal S$,
\begin{equation}
 \begin{aligned}
  f(\vec y) -f(\vec y^*) &\geq \eta (\vec y, \vec y^*)^T \nabla f(\vec y^*), \\
  -g(\vec y^*) &\geq \eta (\vec y, \vec y^*)^T \nabla g(\vec y^*)
 \end{aligned}
\end{equation}
hold.
It suffices to consider only the active constraints, i.e.\ the constraints, where equality holds $g( \vec y^*)=0$. \cite{hanson_invexity_1999}
\end{definition}
Let $K$ be the scaling factor of the conjectured upper bound, i.e.\
\begin{equation}
 K \coloneqq \begin{cases}
   n \, , &\text{for odd } n \, ,\\
   n-1 \, , &\text{for even } n \, .
  \end{cases}
\end{equation}
In the case of \ac{LP}~\eqref{eq:upper_bound_LP} we have $\mathcal S = \Set*{ \vec y \in \R^{\binom{n}{2}} \given V^T \vec y \leq \vec 1}$ and want to show that $\vec y^* = - \vec 1 / \lfloor n/2 \rfloor$ is a global optimum.
Furthermore, we have
\begin{equation}
 \begin{aligned}
  f(\vec y) &= - \lonenorm{ \vec y }\, , \hspace{0.95em} f(\vec y^*) = -K \, , \, \,  \nabla f(\vec y^*) = \vec 1 \, , \\
  g(\vec y) &= V^T \vec y - \vec 1 \, , \, \, g(\vec y^*) = 0 \, , \hspace{1.65em} \nabla g(\vec y^*) = V|_{a} \, ,
 \end{aligned}
\end{equation}
where $V|_{a}$ are the columns of $V$ such that $(V|_{a})^T \vec y^* = \vec 1$, i.e.\ the active constraints.
To show invexity we have to find a common $\eta (\vec y, \vec y^*) \in \R^m$ such that
\begin{equation}
 \begin{aligned}
 \lonenorm{ \vec y } &\leq K - \eta^T  (\vec y, \vec y^*) \vec 1  \,  \text{ and} \\
 (V^T|_{a}) \eta  (\vec y, \vec y^*)  &\leq \vec 0 \, ,
 \end{aligned}
\end{equation}
for all $\vec y \in \mathcal S$.
It is quite challenging to find a $\eta (\vec y, \vec y^*) \in \R^m$ satisfying both inequalities.
In particular, ansätze motivated from the geometry in small dimensions eventually fail for larger $n$.

\section{Acronyms}

\begin{acronym}[MAGIC]\itemsep.5\baselineskip
\acro{AGF}{average gate fidelity}
\acro{AQFT}{approximate quantum fourier transform}

\acro{BOG}{binned outcome generation}

\acro{CP}{completely positive}
\acro{CPT}{completely positive and trace preserving}
\acro{CS}{compressed sensing} 

\acro{DAQC}{digital-analog quantum computing}
\acro{DFE}{direct fidelity estimation} 
\acro{DM}{dark matter}
\acro{DD}{dynamical decoupling}

\acro{EF}{extended formulation}

\acro{GST}{gate set tomography}
\acro{GUE}{Gaussian unitary ensemble}

\acro{HOG}{heavy outcome generation}

\acro{IQP}{Instantaneous Quantum Polynomial
time}

\acro{KKT}{Karush–Kuhn–Tucker}

\acro{LDP}{Lamb-Dicke parameter}
\acro{LP}{linear program}

\acro{MAGIC}{magnetic gradient induced coupling}
\acro{MBL}{many-body localization}
\acro{MIP}{mixed integer program}
\acro{ML}{machine learning}
\acro{MLE}{maximum likelihood estimation}
\acro{MPO}{matrix product operator}
\acro{MPS}{matrix product state}
\acro{MS}{M{\o}lmer-S{\o}rensen}
\acro{MUBs}{mutually unbiased bases} 
\acro{mw}{micro wave}

\acro{NISQ}{noisy and intermediate scale quantum}

\acro{POVM}{positive operator valued measure}
\acro{PVM}{projector-valued measure}

\acro{QAOA}{quantum approximate optimization algorithm}
\acro{QFT}{quantum Fourier transform}
\acro{QML}{quantum machine learning}
\acro{QMT}{measurement tomography}
\acro{QPT}{quantum process tomography}
\acro{QPU}{quantum processing unit}

\acro{RDM}{reduced density matrix}

\acro{SFE}{shadow fidelity estimation}
\acro{SIC}{symmetric, informationally complete}
\acro{SPAM}{state preparation and measurement}

\acro{RB}{randomized benchmarking}
\acro{rf}{radio frequency}
\acro{RIC}{restricted isometry constant}
\acro{RIP}{restricted isometry property}

\acro{TT}{tensor train}
\acro{TV}{total variation}

\acro{VQA}{variational quantum algorithm}

\acro{VQE}{variational quantum eigensolver}

\acro{XEB}{cross-entropy benchmarking}

\end{acronym}

\bibliographystyle{./myapsrev4-2}
\bibliography{new,mk}
\end{document}